\let\ensuremathorig=\ensuremath
\renewcommand{\ensuremath}[1]{\ensuremathorig{#1}\xspace}
\let\paragraphorig=\paragraph
\renewcommand{\paragraph}[1]{\paragraphorig{\bf{#1}}}
\newcommand{\plusone}{\ensuremath{\text{\small\texttt{+1}}}}
\newcommand{\any}{\ensuremath{\_}}
\newcommand{\sep}{\ensuremath{\ldotp}}
\newcommand{\gamcast}{\ensuremath{\text{GM-Send}}}
\newcommand{\gdeliver}{\ensuremath{\text{GM-Deliver}}}
\newcommand{\gbcast}[1]{\ensuremath{\text{GB-Send}_{#1}}}
\newcommand{\gbdeliver}[1]{\ensuremath{\text{GB-Deliver}_{#1}}}
\tikzset{fontscale/.style = {font=\relsize{#1}}}
\newcommand{\YES}{{\color{OliveGreen} $\surd$\xspace}}
\newcommand{\SOME}{{\color{Orange} $\surd$\xspace}}
\newcommand{\NO}{{\color{BrickRed} $\times$\xspace}}
\newcommand\callout[2][]{\tikz[overlay]\node[fill=#1,inner sep=.75pt, anchor=text, rectangle] {#2};\phantom{#2}}
\newcommand{\labfigure}[1]{\label{fig:#1}}
\newcommand{\reffigure}[1]{Figure~\ref{fig:#1}}
\newcommand{\labtab}[1]{\label{tab:#1}}
\newcommand{\reftab}[1]{Table~\ref{tab:#1}}
\newcommand{\labsect}[1]{\label{sect:#1}}
\newcommand{\refsect}[1]{Section~\ref{sect:#1}}
\newcommand{\refsecttwo}[2]{Sections~\ref{sect:#1}~and~~\ref{sect:#2}}
\newcommand{\labappendix}[1]{\label{appendix:#1}}
\newcommand{\labalg}[1]{\label{alg:#1}}
\newcommand{\refalg}[1]{Algorithm~\ref{alg:#1}}
\newcommand{\labline}[1]{\label{line:alg:#1}}
\newcommand{\refline}[1]{line~\ref{line:alg:#1}}
\newcommand{\reflines}[2]{lines~\ref{line:alg:#1} to~\ref{line:alg:#2}}
\newcommand{\reflinestwo}[2]{lines~\ref{line:alg:#1} and~\ref{line:alg:#2}}
\newcommand{\labinv}[1]{\label{inv:#1}}
\newcommand{\refinv}[1]{Invariant~\ref{inv:#1}}
\newcommand{\lablem}[1]{\label{lem:#1}}
\newcommand{\reflem}[1]{Lemma~\ref{lem:#1}}
\newcommand{\labprop}[1]{\label{prop:#1}}
\newcommand{\refprop}[1]{Proposition~\ref{prop:#1}}
\newcommand{\labcor}[1]{\label{cor:#1}}
\newcommand{\labequation}[1]{\label{eq:#1}}
\newcommand{\refequation}[1]{Equation~(\ref{eq:#1})}
\newcommand{\union}{\ensuremath{\cup}}
\newcommand{\inter}{\ensuremath{\cap}}
\newcommand{\code}[1]{\ensuremath{\text{\texttt{#1}}}}
\newcommand{\msg}[1]{\code{#1}}
\newcommand{\beginMsg}{\msg{Begin}}
\newcommand{\proposeMsg}{\msg{Propose}}
\newcommand{\deliverMsg}{\code{Deliver}}
\newcommand{\aread}{\code{read}}
\newcommand{\awrite}{\code{write}}
\newcommand{\acas}{\code{cas}}
\newcommand{\abegin}{\code{begin}}
\newcommand{\aend}{\code{end}}
\newcommand{\key}{\code{key}}
\newcommand{\isRead}{\code{isRead}}
\newcommand{\operations}{\code{ops}}
\newcommand{\Gr}{\mathcal{G}}
\newcommand{\msgBegin}{\code{Begin}}
\newcommand{\msgPropose}{\code{Propose}}
\newcommand{\msgDeliver}{\code{Deliver}}
\newcommand{\msgAdvance}{\code{Advance}}
\newcommand{\delOrder}{\ensuremath{\rightarrow}}
\newcommand{\delOrderOf}[1]{\ensuremath{\rightarrow_{#1}}}
\newcommand{\equaldef}{\ensuremath{\stackrel{\triangle}{=}}}
\newcommand{\tlaplus}{\ensuremath{{\text{TLA}^{+}}}}
\newcommand{\SAF}{\ensuremath{\text{\textsf{SAF}}}}
\newcommand{\SAFa}{\ensuremath{\text{\textsf{SAFa}}}}
\newcommand{\SAFb}{\ensuremath{\text{\textsf{SAFb}}}}
\newcommand{\LIV}{\ensuremath{\text{\textsf{LIV}}}}
\newcommand{\LIVa}{\ensuremath{\text{\textsf{LIVa}}}}
\newcommand{\LIVb}{\ensuremath{\text{\textsf{LIVb}}}}
\newcommand{\LIVc}{\ensuremath{\text{\textsf{LIVc}}}}
\newif\iflong
\newcommand{\mynote}[3]{
    \fbox{\bfseries\sffamily\scriptsize#1}
    {\small$\blacktriangleright$\textsf{\emph{\color{#3}{#2}}}$\blacktriangleleft$}}}
\newcommand{\mynote}[3]{}}
\begin{document}

%%
%% The "title" command has an optional parameter,
%% allowing the author to define a "short title" to be used in page headers.

\iflong
\title[Generic Multicast (Extended Version)]{Generic Multicast\\(Extended Version)} 
\else
\title[Generic Multicast]{Generic Multicast}
\subtitle{Full research paper}
\fi

\author{Jos{\'e} Augusto Bolina}
\authornote{The main contributions of these authors were made while studying as MSc students at the Computer Science School of Federal University of Uberlândia, Brazil.}
\email{jbolina@redhat.com}
\orcid{0000-0001-6654-6442}
\affiliation{%
  \institution{Red Hat, Inc.}
  \country{Brazil}
}

\author{Pierre Sutra}
\email{pierre.sutra@telecom-sudparis.eu}
\orcid{0000-0002-0573-2572}
\affiliation{
    \institution{T{\'e}l{\'e}com SudParis}
    \institution{INRIA}
    \country{France}
}

\author{Douglas Antunes Rocha}
\authornotemark[1]
\email{douglasanr@ufu.br}
\orcid{0009-0009-7237-7484}
\affiliation{%
  \country{Brazil}
}

\author{Lasaro Camargos}
\email{lasaro@weilliptic.com}
\orcid{0000-0002-4162-6160}
\affiliation{%
  \institution{Weilliptic Inc.}
  \country{USA}
}
\affiliation{%
   \institution{Federal University of Uberlândia}
   \country{Brazil}}

%%
%% By default, the full list of authors will be used in the page
%% headers. Often, this list is too long, and will overlap
%% other information printed in the page headers. This command allows
%% the author to define a more concise list
%% of authors' names for this purpose.
\renewcommand{\shortauthors}{Bolina et al.}

\begin{abstract}
Communication primitives play a central role in modern computing.
They offer a panel of reliability and ordering guarantees for messages, enabling the implementation of complex distributed interactions.
In particular, atomic broadcast is a pivotal abstraction for implementing fault-tolerant distributed services.
This primitive allows disseminating messages across the system in a total order.
There are two group communication primitives closely related to atomic broadcast.
Atomic multicast permits targeting a subset of participants, possibly stricter than the whole system.
Generic broadcast leverages the semantics of messages to order them only where necessary (that is when they conflict).
In this paper, we propose to combine all these primitives into a single, more general one, called generic multicast.
We formally specify the guarantees offered by generic multicast and present efficient algorithms.
Compared to prior works, our solutions offer appealing properties in terms of time and space complexity.
In particular, when a run is conflict-free, that is no two messages conflict, a message is delivered after at most three message delays.
\end{abstract}

%%
%% The code below is generated by the tool at http://dl.acm.org/ccs.cfm.
%% Please copy and paste the code instead of the example below.
%%
\begin{CCSXML}
<ccs2012>
   <concept>
       <concept_id>10010520.10010575</concept_id>
       <concept_desc>Computer systems organization~Dependable and fault-tolerant systems and networks</concept_desc>
       <concept_significance>500</concept_significance>
       </concept>
   <concept>
       <concept_id>10003752.10003809.10010172</concept_id>
       <concept_desc>Theory of computation~Distributed algorithms</concept_desc>
       <concept_significance>500</concept_significance>
       </concept>
 </ccs2012>
\end{CCSXML}

\ccsdesc[500]{Computer systems organization~Dependable and fault-tolerant systems and networks}
\ccsdesc[500]{Theory of computation~Distributed algorithms}

%%
%% Keywords. The author(s) should pick words that accurately describe
%% the work being presented. Separate the keywords with commas.
\keywords{Consensus, Multicast, Broadcast, Generalized Consensus.}

%%
%% Update accordingly
\iflong
\else
\received{19 July 2024}
\received[accepted]{25 September 2024}
\fi

%%
%% This command processes the author and affiliation and title
%% information and builds the first part of the formatted document.
\maketitle

%%%%%%%%%%%%%%%%%%%%%%%%%%%%
%
% INTRODUCTION
%
%%%%%%%%%%%%%%%%%%%%%%%%%%%%
\section{Introduction}
\labsect{introduction}

Atomic broadcast is a fundamental building block of modern computing infrastructures.
This communication primitive offers strong properties in the ordering and delivery of messages.
Atomic broadcast finds usage in many storage systems, from file systems and relational databases to object stores and blockchains \cite{chandra2007paxos,hunt2010zookeeper,corbett2013spanner,ongaro2014search,CamaioniGMRVV24}.
It ensures the scalability, high availability, and fault tolerance of these distributed systems.
% atomic multicast
At its core, atomic broadcast guarantees the reliable delivery of messages in the same total order across the system.
Such guarantees generalize into two natural and related group communication primitives.

The first primitive is called \emph{atomic multicast}.
Atomic multicast allows sending a message to a (possibly stricter) subset of the processes in the system.
In this case, the ordering property becomes a partial order linking the pairs of messages having a joint destination (a common process).
Atomic multicast helps to distribute the load and better leverage workload parallelism.
It is used in geo-replicated and partially-replicated systems where data is stored in multiple partitions \cite{granola,BenzMPG14a}.

The second common generalization of atomic broadcast is \emph{generic broadcast} (aka., generalized consensus).
This primitive is introduced in the work of \textcite{pedone2002handling} and \textcite{lamport2005generalized}.
Here, the ordering property binds messages that conflict, that is, messages whose processing does not commute in the upper application layer.
Generic broadcast leverages the semantics of messages to expedite delivery and improve overall performance \cite{lamport2005generalized,pedone2002handling,epaxos}.

\paragraph{Contributions}

In this work, we propose to combine the two primitives, atomic multicast and generic broadcast, into a new primitive called \emph{generic multicast}.
As atomic multicast, generic multicast permits sending a message to a subset of the processes in the system.
Delivery is based on the semantics of messages, as in generic broadcast.
This paper defines generic multicast and proposes two new solutions.

With more details, we claim the following contributions:
\emph{(i)} the definition of generic multicast for crash-prone distributed systems,
\emph{(ii)} a base solution that extends the timestamping approach of Skeen \cite{birman1987reliable},
\emph{(iii)} building upon this, an understandable fully-fledged fault-tolerant generic multicast algorithm.

Our algorithms extend well-established mechanisms proposed in the literature (\textit{e.g.}, \cite{birman1987reliable, fritzke1998fault, schiper2007optimal, ahmed2016convoy}).
Our base (non fault-tolerant) solution delivers a message in three communication steps when no two messages conflict.
Tolerating failures attains the same lower bound, for a total of~3 communication steps.
To the best of our knowledge, this is the first time this delivery latency has been attained in a conflict-free scenario (see~\reftab{comparison} for a comparison with prior works).
Additionally, by expanding well-established constructs, the algorithms remain simple and understandable.
Furthermore, the two algorithms are thrifty in the amount of metadata they use.
Regarding standard approaches in the field, they solely need a set to store non-conflicting messages and a scalar.

\begin{table}[t]
  \small
  \begin{tabular}{c|c|c|c|c|c}
    \multicolumn{1}{c}{} & \multicolumn{3}{c}{Latency} \\ 
    Algorithm & \makecell{conflict\\-free} & \makecell{collision\\-free} & \makecell{failure\\-free} & generic? & \makecell{metadata\\per msg.} \\ \hline
    FastCast \cite{coelho2017fast} & 8 & 4 & 8 & \NO & $O(1)$ \\ \hline
    White-Box \cite{gotsman2019white} & 5\plusone & 3\plusone & 5\plusone & \NO & $O(1)$ \\ \hline
    PrimCast \cite{pacheco2023primcast} & 5 & 3 & 5 & \NO & $O(1)$ \\ \hline
    \hline
    Tempo \cite{enes2021efficient} & 4 & 5 & 13 & \SOME & $O(m)$ \\ \hline % FIXME check these w. Vitor
    Ahmed et al. \cite{ahmed2016convoy} $\dag$ & 3 & \multicolumn{2}{c|}{5} & \YES & $O(1)$ \\ \hline
    \hline
    \refalg{nofault} $\dag$ & 3 & 3 & 5 & \YES & $O(1)$ \\ \hline
    \refalg{gmcast} & 3 & 5 & 11 & \SOME & $O(1)$ % discounting convoy effect.
  \end{tabular}
  \caption{
    \labtab{comparison}
    Atomic multicast algorithms.
    (\plusone: additional delay to non-leader processes; $\dag$: not fault-tolerant).
  }
\end{table}

\paragraph{Outline}
The remaining of this work is organized as follows.
\refsect{related-work} reviews existing literature on atomic multicast.
\refsect{problem} presents the system model and defines generic multicast as well as some related properties of interest.
\refsect{nofault} depicts our base (non-fault-tolerant) solution.
\refsect{gmcast} extends the base solution into a fully-fledged fault-tolerant algorithm.
Additionally, this section argues about the performance of the algorithm and its correctness.
We close in \refsect{conclusion} with a summary of the paper and a prospect of future works.

%%%%%%%%%%%%%%%%%%%%%%%%%%%%
%
% HISTORY AND ALGORITHMS
%
%%%%%%%%%%%%%%%%%%%%%%%%%%%%
\section{Background}
\labsect{related-work}

The literature on atomic multicast is rich, finding its roots in the early works on group communication primitives.
In what follows, we offer a brief tour of the topic, underlining the key ideas and algorithmic principles.
\reftab{comparison} lists the most recent solutions and compares them against the two algorithms we cover in \refsecttwo{nofault}{gmcast}.
\reffigure{algorithm-tree} gives the genealogy of the two algorithms.

\paragraph{Early Solutions}
The first solution to atomic multicast appears in the work of \textcite{birman1987reliable}.
In this work, the authors describe an unpublished algorithm by Dale Skeen that use priorities to order the delivery of messages.
The algorithm employs a two-phase approach.
In the first phase, the sender disseminates the message to its destination group and awaits their responses.
Upon receiving such a message, an algorithm assigns it a priority greater than prior messages and sends this priority back to the sender.
The sender collects responses from the destination groups.
It computes the highest priority among all the proposals and informs the destination group.
Each process in the destination group assigns to the message this new priority.
At a process, messages are delivered in the order of their priority, from lowest to highest.

In modern terms, the solution of Skeen is a \emph{timestamping} algorithm.
This early solution does not tolerate failures:
when the sender fails, the algorithm stalls.
\textcite{birman1987reliable} propose that another process takes over the role of the sender when it fails.
This requires a synchronous system.
However, in general, distributed systems are partially synchronous.
In this context, one cannot detect accurately if a process in the destination group has failed.
To deal with it, \textcite{GuerraouiS97} propose to partition the recipients of a message into (disjoint) consensus groups.
Each group maintains a clock that it uses and advances to propose timestamps.
A similar solution is proposed by \textcite{fritzke1998fault} and \textcite{delporte00}.
From a high-level perspective, all these approaches can be seen as replicating the logic of a Skeen process over a reliable group of machines.

\paragraph{Message Semantics}
In Skeen's approach, each process maintains a logical clock it uses to assign timestamps.
Once the final timestamp of a message is known, the logical clock is bumped to a higher value.
This ensures progress as, otherwise, the message would stall until enough (prior) messages get timestamped.
In this schema, right before the clock is bumped, another message can sneak in and retrieve an earlier timestamp.
This delays delivery.
Such a situation is due to contention, creating a convoy effect in the system \cite{ahmed2016convoy}.
To deal with it, \textcite{ahmed2016convoy} propose to leverage the semantics of messages.
If from the application's perspective the two messages commute, they do not need to wait for each other.
Such an idea finds its root in the generalizations of atomic broadcast \cite{pedone1999generic,lamport2005generalized}.
These algorithms use the semantics of messages to deliver them earlier.
Building upon this, \textcite{ahmed2016convoy} define generic multicast as a variation of atomic multicast that understands the message semantics.
However, the definition in \cite{ahmed2016convoy} only applies to the failure-free case.
We extend it to failure-prone systems in \refsect{problem}.
Notice that the mention of a generalized version of atomic multicast can be traced back in earlier works (\textit{e.g.}, \cite{SchiperPhD}).

\paragraph{Recent Progress}
To bump the clock, \textcite{GuerraouiS97} execute a second agreement per consensus group.
\textcite{schiper2007optimal} observe that this is not always necessary.
In fact, when a message is addressed to a single consensus group, no timestamping mechanism is needed.
\textcite{coelho2017fast} propose the FastCast algorithm.
This solution delivers messages in fewer communication steps by leveraging a speculative path to update the logical clock.
The fast path executes concurrently with a slower (conservative) path.
The message is delivered in four message delays when the fast path correctly guesses the final timestamp.
An improvement is made to the above schema by \textcite{gotsman2019white}.
This algorithm opens the consensus ``black box'' to make additional optimizations.
In the absence of collisions, the solution in \cite{gotsman2019white} delivers a message in just three message delays at the leader of a consensus group.
Non-leader members need an additional communication step.
PrimCast \cite{pacheco2023primcast} is the best collision-free solution known to date.
It cuts the additional message delay by exchanging commit acknowledgment messages across consensus groups (and not just at the leaders as in \cite{gotsman2019white}).
The authors also propose a technique of loosely synchronized logical clocks to reduce the convoy effect.

When multiple messages are addressed concurrently to a process, the above algorithms are slower (``failure-free'' column in \reftab{comparison}).
To avoid this, Tempo \cite{enes2021efficient} leverages the semantics of messages.
Messages are split into $m$ classes (see~\reftab{comparison}).
Two messages conflict, \textit{i.e.}, do not commute, only if they are in the same class (identified with a key).
When messages collide, but do not conflict, Tempo can deliver them in four message delays (``collision-free'' column in \reftab{comparison}).

The work in \cite{sutra2022weakest} characterizes the minimal synchrony assumptions to solve atomic multicast.
In particular, this work shows that, in some cases, weaker assumptions than the standard partitioning into consensus groups is possible.
This paper does not investigate such systems.

\newcommand{\dashdottedarrow}[1][1.75em]{\mathrel{
    \tikz[baseline,dashdotted]{\draw[->](0,.58ex)--(#1,.58ex)}}}
\tikzstyle{harden} = [->,dashdotted,color=Red]

\newcommand{\dashedarrow}[1][1.75em]{\mathrel{
  \tikz[baseline]{\draw[dash pattern=on .25em off .1 em,->](0,.58ex)--(#1,.58ex)}}}
\tikzstyle{generalize} = [->,color=OliveGreen,dash pattern=on .25em off .1 em]

\newcommand{\arrow}[1][1.75em]{\mathrel{
    \tikz[baseline]{\draw[->](0,.58ex)--(#1,.58ex)}}}
\tikzstyle{optimize} = [->,color=RoyalBlue]

\begin{figure}[!t]
  \centering
  \small
    \begin{tikzpicture}[
        node distance={45mm},
        resize/.style = {scale=0.8},
        every node/.style={text width=6em, align=center}
    ]
        \node (1) {Skeen \cite{birman1987reliable}}; 
        \node (2) [right of=1] {\textcite{ahmed2016convoy}}; 
        \node (3) [below=6mm of 2] {\refalg{nofault} (\refsect{nofault})};
        \node (4) [below=23mm of 1] {\textcite{fritzke1998fault}}; 
        \node (5) [below=6mm of 4] {\textcite{schiper2007optimal}};
        \node (6) [below=21.3mm of 3] {\refalg{gmcast} (\refsect{gmcast})};

        \node at (0, 1) {\textbf{Atomic Multicast}};
        \node at (4.5, 1) {\textbf{Generic Multicast}};
        \node[rotate=90] at (-1.5, -.7) {\textbf{No faults}};
        \node[rotate=90] at (-1.5, -3.4) {\textbf{Crash-Stop}};

        \draw[generalize] (1) --(2) ; 
        \draw[harden] (1) -- (4);
        \draw[optimize] (2) -- (3); 
        \draw[harden] (3) -- (6); 
        \draw[optimize] (4) -- (5);
        \draw[generalize] (5) -- (6);

        \draw[gray, dashed, opacity=0.5] (2.35, 1) -- (2.35, -5);
        \draw[gray, dashed, opacity=0.5] (-1.5, -2) -- (5.5, -2);

        \matrix [draw,above] at (2.2,1.7) {
          \node[text width=23em] (a0) {
            {\color{Red}$\dashdottedarrow$} harden \hspace{2em}
            {\color{OliveGreen}$\dashedarrow$} generalize \hspace{2em}
            {\color{RoyalBlue}$\arrow$} optimize};\\
          %% \node [resize, anchor=west] {1. Solve generic multicast.}; \\
          %% \node [resize, anchor=west] {2. Advance generic multicast.}; \\
          %% \node [resize, anchor=west] {3. Replication for fault-tolerance.}; \\
          %% \node [resize, anchor=west] {*. Optimizations.};\\
        };
    \end{tikzpicture}
    \caption{
      \labfigure{algorithm-tree}
      Structured derivations of generic multicast.      
    }
\end{figure}

%%%%%%%%%%%%%%%%%%%%%%%%%%%%
%
% SYSTEM MODEL
%
%%%%%%%%%%%%%%%%%%%%%%%%%%%%
\section{Generic Multicast}
\labsect{problem}

This work introduces the generic multicast problem in failure-prone systems.
Generic multicast is a flexible group communication primitive that takes the best of two worlds.
Like atomic multicast, it permits addressing a message to a subset of the processes in the system.
Additionally, the semantics of the messages is taken into account, as in generic broadcast, to order them only where necessary.
Both effects contribute to better delivery performance, as illustrated in \reftab{comparison}.

This section presents the system model and defines the generic multicast problem.
It then illustrates this primitive in the context of distributed storage.
Further, we discuss how to implement generic multicast efficiently.

\subsection{System Model}
\labsect{problem:system}

The distributed system consists of a set $\Pi = \{p_1, p_2, \ldots, p_{n \geq 2}\}$ of processes.
A process may fail by crashing, halting its execution.
When this happens, the process is \emph{faulty}.
Otherwise, it is \emph{correct}.
Processes do not share a common memory and solely rely on passing messages through communication channels to exchange information.
Any two processes are connected with a channel.
A call to $Send\langle m,p\rangle$ sends the message $m$ to the process $p$.
The event $Receive\langle m,p\rangle$ is triggered when message $m$ is received from $p$.
Channels are \emph{reliable}.
This means that if a correct process $p$ sends a message to a correct process $q$, then eventually $q$ receives this message.

The system is partially synchronous.
Partial synchrony is modeled with the help of failure detectors \cite{chandra1996unreliable}.
For pedagogical purposes, we first consider a failure-free system in which all processes are correct (\refsect{nofault}).
Later, we augment the system with failure detectors and quorums (the details are in \refsect{gmcast}).

\subsection{Definition}
\labsect{problem:definition}

Generic multicast permits the dissemination of a message across the system with strong liveness and safety guarantees.
In what follows, we formally define the communication primitive and illustrate its usage in the context of a storage system.

We note $\mathcal{M}$ the set of messages disseminated with generic multicast.
Each such message $m$ carries its source ($m.src \in \Pi$), a unique identifier ($m.id$), as well as the \emph{destination group} ($m.d \subseteq \Pi$).
There are two operations at the interface:
$\gamcast(m)$ multicasts some message $m$ to its destination group, that is, the processes in $m.d$.
$\gdeliver(m)$ triggers at a process when message $m$ is delivered locally.
As common, we consider that processes multicast different messages and that a message is multicast at most once.

In generic multicast, delivery is parameterized with a conflict relation ($\sim$) that captures the semantics of messages from an application perspective \cite{pedone1999generic,lamport2005generalized}.
Relation $\sim$ is irreflexive and symmetric over $\mathcal{M}$.
It is such that if two messages conflict, then they must have a recipient in common:
$m \sim m' \implies m.d \inter m.d' \neq \emptyset$.

For a process $p$ and two conflicting messages $m \sim m'$, we note $m \delOrderOf{p} m'$ when $p$ delivers message $m$ before it delivers message $m'$.
This relation tracks the \emph{local delivery order} at process $p$.
Notice that $m \delOrderOf{p} m'$ holds even if $p$ never delivers message $m'$.
How messages are delivered across the system is captured with the \emph{global delivery order}.
This relation is defined as: $\delOrder = \bigcup_{p\in\Pi} \delOrder_p$.

Generic multicast ensures the following properties on the dissemination of messages:
\begin{itemize}
\item[-] \textbf{Integrity}:
  For some message $m$, every process $p \in m.d$ invokes $\gdeliver(m)$ at most once and only if some process invoked $\gamcast(m)$ before.
\item[-] \textbf{Termination}:
  If a correct process $p$ invokes $\gamcast(m)$ or $\gdeliver(m)$ for some message $m$, eventually every correct process $q \in m.d$ executes $\gdeliver(m)$.
\item[-] \textbf{Ordering}:
  Relation $\delOrder$ is acyclic.
\end{itemize}

Generic multicast establishes a partial order among delivered messages.
In simple terms, it ensures that conflicting messages are delivered in a total order, while non-conflicting ones are delivered in any order.
In both cases, a message is delivered reliably at all the processes in its destination group.

\paragraph{Flexibility}
Generic multicast is a flexible communication primitive.
If the conflict relation binds any two messages ($\sim=\mathcal{M} \times \mathcal{M}$), we obtain atomic multicast.
Conversely, if there is no conflict ($\sim=\emptyset$), the above definition is the one of reliable multicast.
These observations also extend naturally to the case where messages are addressed to everybody in the system.
In such a case, the specification boils down to the one of generic broadcast \cite{pedone1999generic}.
Then, if all messages conflict, we obtain atomic broadcast, and when there are no conflicts, the specification of reliable broadcast.

From what precedes, by adjusting the conflict relation and/or the destination groups, we can adapt the behavior of generic multicast to suit an application's needs.
This permits tailoring an implementation to a specific context, without modifying it.
In the following section, we illustrate such an idea in the context of a storage system.

\subsection{Usage: A Key-Value Store}
\labsect{problem:illustration}

To illustrate the above definitions, let us consider a key-value store.
Such a service is common in cloud infrastructures where it can be accessed concurrently by multiple remote clients (typically, application backends).
A key-value store maps a set of keys to a set of values.
In detail, its interface consists of three operations:
a call to \aread(k) reads the content stored under key $k$,
operation \awrite(k,v) maps key $k$ to value $v$, and
\acas(k,u,v) executes a compare-and-swap, that is it stores $v$ under key $k$, provided the associated value was previously $u$.
Additionally, the interface includes a batch operator $\abegin{\ldots}\aend$ which permits grouping several of these operations.

Modern key-value stores replicate data across several availability zones and/or geographical regions.
This improves fault tolerance and data locality.
A standard approach to implementing such systems is to rely on the use of atomic broadcast in conjunction with atomic commitment.
For instance, this is the design of Spanner \cite{corbett2013spanner}.
In Spanner, each unit of replication, or \emph{tablet}, is replicated across multiple data centers.
Operations on a tablet are ordered with the Paxos consensus algorithm.
When a batch of operations executes over multiple tablets, the replicas run two-phase commit (2PC) \cite{gray1978notes, lampson1979crash} to agree on committing or aborting its changes.

Atomic multicast provides an alternative design \cite{granola,BenzMPG14a}.
Upon executing an operation, a message is multicast to the replicas holding a copy of the corresponding data items.
When the message is delivered, replicas apply the operation locally.
(If the message is a batch, an additional phase is needed, as detailed in \cite{BezerraPR14}.)
Thanks to the ordering property of atomic multicast, this approach also provides strong consistency to the service clients.

In this context, generic multicast can be used as a drop-in replacement to atomic multicast.
For this, we need to define the conflict relation ($\sim$) appropriately.
Given an operation $c$, $\isRead(c)$ evaluates to true when $c$ is a read operation.
We write $\key(c)$ the key accessed by $c$.
Operations $c$ and $d$ conflict when they access the same key and one of them is an update.
Formally,
\begin{equation}
  \labequation{rw}
  c~\sim~d
  ~\equaldef~
  \key(c) = \key(d) ~\land \neg~(\isRead(c) \land \isRead(d))
\end{equation}
For some message $m$, we write $\operations(m)$ the operations $m$ disseminates.
Given two distinct messages $m$ and $m'$, the conflict relation is then defined as:
\begin{equation}
  \labequation{conflict}
  m~\sim~m'
  ~\equaldef~
  \exists c,d \in \operations(m) \times \operations(m') \ldotp c \sim d 
\end{equation}

With generic multicast, operations on the key-value store that do not access the same items commute.
Hence, this group communication primitive permits to order messages only where needed;
this is more flexible and potentially also faster.
If the workload is contended on some specific keys, the system coordinates only the access to those hot items.
If later, the workload becomes uniform, there is no need to re-visit the architecture, as the same guarantees still hold.

In the remainder of the paper, we explain how to implement generic multicast efficiently.
Before detailing these solutions, we first discuss the properties of interest they should have.

\subsection{Implementation Properties}
\labsect{problem:properties}

Atomic multicast satisfies all the requirements of generic multicast.
Unfortunately, it also orders more messages than necessary.
To avoid this, the authors in \cite{pedone1999generic} introduce the notion of \emph{strictness}.
Such a property captures when an implementation permits non-conflicting messages to be delivered in different orders.
The definition given in \cite{pedone1999generic} is for generic broadcast.
It is extended naturally to our context as follows:

\begin{itemize}
\item[] (\emph{Strictness})
  Consider two processes $p$ and $q$ and two non-conflicting messages $m$ and $m'$ with $\{p, q\} \subseteq m.d \cap m'.d$.
  There exist an execution $\epsilon$ during which where $p$ delivers $m$ before $m'$, while $q$ delivers $m'$ before $m$.
\end{itemize}

Another property of interest is with respect to the destination group.
Indeed, it is possible to implement generic multicast by delivering messages first everywhere, with atomic (or generic) broadcast, and then filtering them out based on their destinations.
However, this defeats the purpose of targeting a subset of the system.
In \cite{guerraoui2001genuine}, the authors introduce a minimality property that rules out such approaches.

\begin{itemize}
\item[] (\emph{Minimality})
  In every run $\epsilon$, if some correct process $p$ sends or receives a (non-null) message in $\epsilon$, there exists a message $m$ multicast in $\epsilon$ with $p \in m.d$.
\end{itemize}

The two properties above are of interest from an implementation perspective.
They ensure the communication primitive is flexible enough and appropriately leverages the semantics of messages it disseminates.
In what follows, we complement these with metrics that capture performance.

\subsection{Time Complexity}
\labsect{problem:latency}

In \cite{gotsman2019white}, the authors define the notions of failure-free and collision-free latency.
These metrics are defined when the system is stable, that is, when there are no failures and the system behaves synchronously.
We now extend such metrics to the context of generic multicast.
At core, we take into account the semantics of messages.
Runs are classified whether they contain concurrent \emph{and conflicting} messages, or not.

The \emph{delivery latency} of a message is the time interval between the moment it is multicast and delivered everywhere.%
\footnote{
In \cite{gotsman2019white}, the authors consider the moment the message is first delivered.
Here, we follow the definition proposed in \cite{pacheco2023primcast}.
}
The \emph{conflict-free latency} is the maximum delivery latency for a message when there are no conflicting messages.
A message $m$ precedes a message $m'$ if $m$ is last delivered before $m'$ is multicast.
\footnote{
The term ``before'' refers here to real-time ordering.
}
Two messages $m$ and $m'$ are concurrent when neither $m$ precedes $m'$ nor the converse holds.
The \emph{collision-free latency} is the maximum delivery for a message when there is no concurrent message with a common destination.
Last, the \emph{failure-free latency} is the maximum delivery latency for a message in the presence of concurrent and conflicting messages.

To illustrate the above notions, let us go back to our storage example (see \refsect{problem:illustration}).
The failure-free latency defines an upper bound on the time an operation takes to return.
If the operation does not encounter a concurrent operation when accessing the same replicas, it may return earlier;
this is the collision-free latency.
Now in case the operation accesses a cold item (or a batch of them), its response can be even faster:
in this situation, data is identical everywhere, which is similar to running the operation solo from the initial state.
This optimal case is measured with conflict-free latency.

\paragraph{Lower bounds}
As mentioned earlier, atomic broadcast (and thus consensus) trivially reduces to generic multicast.
Hence any known lower bound on this communication primitive applies to generic multicast.
It is well-known that atomic broadcast can deliver at best a message after a single round-trip, that is, two message delays, in a conflict-free (or collision-free) scenario \cite{lamport2005generalized}.
Now, when a collision occurs, three message delays are necessary \cite{Lamport06a}.

\section{A Base Solution}
\labsect{nofault}

This section introduces a base solution to generic multicast.
This algorithm follows the standard schema invented by Skeen \cite{birman1987reliable}:
For a message, each process in its destination group proposes a timestamp.
The highest such proposal is the final timestamp of the message.
Message delivery happens in the order of their final timestamps.

In Skeen's schema, the clock ticks every time a message is received.
Our key observation is that generic multicast needs only this to happen when a conflict occurs.
Moreover, conflict-free messages are delivered in parallel without waiting in contrast to vanilla atomic multicast.
Both mechanisms reduce the latency and the convoy effect in the communication primitive.

\refalg{nofault} depicts an implementation of generic multicast.
This algorithm works when processes are all correct.
Below, we present its variables and the overall logic, then detail the internals.

\begin{algorithm}[t]
    \caption{Base generic multicast -- code at $p$.} 
    \labalg{nofault}
    % Removes the `procedure` prefix, we write what we want.
    % Add the -1mm to remove the automatic padding added with an empty string.
    \algrenewcommand\algorithmicprocedure{\hspace{-1mm}}
    \begin{algorithmic}[1]
        \State \textbf{Variables:}
        \State $K \gets 0$ 
        \State $Pending \gets \emptyset$
        \State $Delivering \gets \emptyset$
        \State $PreviousMsgs \gets \emptyset$
        \vspace{.5em}
        
        \Procedure{\textbf{procedure} \gamcast}{m} \labline{nofault:amcast:1}
            \ForAll{$q \in m.d$} \labline{nofault:amcast:2}
                \State $Send\langle \beginMsg(m), p\rangle$ \labline{nofault:amcast:3}
            \EndFor
        \EndProcedure
        \vspace{.5em}
        
        \Procedure{\textbf{when} $Receive \langle \beginMsg(m),q\rangle$}{} \labline{nofault:begin:1}
            \If{$\exists\; m' \in PreviousMsgs \sep m \sim m'$} \labline{nofault:begin:2}
                \State $K \gets K + 1$ \labline{nofault:begin:3}
                \State $PreviousMsgs \gets \emptyset$ \labline{nofault:begin:4}
            \EndIf
            \State $PreviousMsgs \gets PreviousMsgs \cup \{m\}$ \labline{nofault:begin:5}
            \State $Pending \gets Pending \cup \{(m, K)\}$ \labline{nofault:begin:6}
            \State $Send\langle\texttt{Propose}(m, K), q\rangle$ \labline{nofault:begin:7}
        \EndProcedure
        \vspace{.5em}
        
        \Procedure{\textbf{when} $\forall q \in m.d \sep Receive\langle\texttt{Propose}(m, ts), q\rangle$}{} \labline{nofault:decide:1}
            \State $ts_f \gets$ max($\{ts ~:~ Receive\langle \texttt{Propose}(m, ts), q \rangle\}$) \labline{nofault:decide:2}
            \ForAll{$q \in m.d$} \labline{nofault:decide:3}
                \State $Send\langle \deliverMsg(m, ts_f), q\rangle$ \labline{nofault:decide:4}
            \EndFor
        \EndProcedure
        \vspace{.5em}
        
        \Procedure{\textbf{when} $Receive\langle \deliverMsg(m, ts_f), p\rangle$}{} \labline{nofault:advance:1}
            \If{$ts_f > K$} \labline{nofault:advance:2}
                \If{$\exists m' \in PreviousMsgs \sep m \sim m'$} \labline{nofault:advance:3}
                    \State $K \gets ts_f + 1$ \labline{nofault:advance:4}
                    \State $PreviousMsgs \gets \emptyset$ \labline{nofault:advance:5}
                \Else
                    \State $K \gets ts_f$ \labline{nofault:advance:6}
                \EndIf
            \EndIf
            \State $Pending \gets Pending \setminus \{( m,\_)\}$ \labline{nofault:advance:7}
            \State $Delivering \gets Delivering \cup \{(m, ts_f)\}$ \labline{nofault:advance:8}
        \EndProcedure
        \vspace{.5em}

        \Procedure{\textbf{procedure} \gdeliver}{m} \labline{nofault:deliver:1}
            \Statex \textbf{pre:} $\land~ \exists t \sep (m,t) \in Delivering$ 
            \Statex \hspace{1.8em} $\land~ \forall m', t' \sep (m',t') \in Delivering \cup Pending$
            \Statex \hspace{6em} $\implies (m \not\sim m' \vee (m,t) < (m',t'))$ \labline{nofault:deliver:2}
            \State $Delivering \gets Delivering \; \setminus \; \{(m,t)\}$ \labline{nofault:deliver:3}
        \EndProcedure
    \end{algorithmic}
\end{algorithm}

\paragraph{Overview}
At a process, \refalg{nofault} makes use of the following four local variables.
\begin{itemize}
\item[-] $K$:
  A logical clock to propose a timestamp for each message.
\item[-] $Pending$:
  The messages that do not have a timestamp assigned to them so far.
\item[-] $Delivering$:
  The messages whose timestamp is decided and that are ready for delivery.
\item[-] $PreviousMsgs$:
  The messages received since a conflict was detected.
  By construction, this set only contains messages that do not conflict with each other.
\end{itemize}
Using the above variables, \refalg{nofault} proposes and decides timestamps for the messages submitted to generic multicast.
In a nutshell, each newly submitted message is first disseminated to its destinations.
These processes advance their clocks to propose a timestamp for the message.
Such proposals are then gathered, and the highest one defines the final timestamp.
A message is then delivered in the order of its timestamp.

As mentioned earlier, the logical clock advances only when a conflict is detected.
Namely, when a conflict is detected, the set $PreviousMsgs$ is cleared and the clock advances.
This mechanism reduces the convoy effect in the communication primitive.
We further detail this next.

\paragraph{Internals}
To disseminate some message $m$, a process $p$ invokes at \refline{nofault:amcast:1} operation $\gamcast(m)$.
The operation sends a message $\msgBegin(m)$ to all the processes in $m.d$ using the underlying $Send$ communication primitive.
When this happens, we say that $p$ is the \emph{sender} of message $m$.

At \refline{nofault:begin:1}, the handler triggers when process $p$ receives a $\msgBegin(m)$ message from $m$'s sender, $q$.
In such a case, $p$ first checks if $m$ conflicts with a previously received message.
This computation is at \refline{nofault:begin:2}, using the $PreviousMsgs$ variable.
If the process identifies a conflict, it increments its clock and clears the $PreviousMsgs$ set in \reflines{nofault:begin:3}{nofault:begin:4}. 
Then, the process stores the message $m$ and its timestamp proposal, which is the value of its clock.
The process sends to the sender $q$ the proposal for $m$ at \refline{nofault:begin:7} with a \msgPropose message.

When the sender gathers a proposal from all the processes in $m.d$, it computes the final timestamp of $m$ by selecting the highest proposed timestamp.
The sender then disseminates the decision to all the processes with a \msgDeliver message.
Such a computation is in \reflines{nofault:decide:1}{nofault:decide:4}.

Upon receiving the final timestamp $ts_f$ for message $m$, a process checks if its clock is lower than $ts_f$.
If this happens to be the case, the clock is advanced to $ts_f$.
In case a conflict is detected, the clock is also incremented by one and the $PreviousMsgs$ set is cleared.
This step ensures that no earlier message is missed:
when the final timestamp is known, all the messages before $m$ and conflicting with it are stored locally.

Once the above steps are executed, the message is ready to get delivered at \refline{nofault:deliver:1}.
The precondition in this line ensures that delivery happens in the order of the final timestamps.
In case two timestamps are tied between conflicting messages, the identifier of each message provides a deterministic order.
This means that $(m,t) < (m', t')$ reads as $t < t' \lor (t=t' \land m.id < m'.id)$.

\paragraph{Discussion}
Variable $PreviousMsgs$ can grow arbitrarily large when there is no conflict.
This may negatively impact memory usage.
In practice, it suffices to bump the clock and clear the set periodically.
Because processes cannot unilaterally bump their clocks, such a garbage-collection mechanism ought to be deterministic:
it can happen at a given interval or by broadcasting an appropriate command across the system.

\refalg{nofault} does not restrict the conflict relation.
Therefore, it is possible to build the storage example (\refsect{problem:illustration}) utilizing the conflict relation of \refequation{conflict}.
The storage system would be linearizable taking full advantage of non-conflicting messages.

For \refalg{nofault}, we do not present correctness arguments.
They are detailed in the next section where we present a complete fault-tolerant solution.

\section{Dealing With Failures}
\labsect{gmcast}

We now present a fault-tolerant algorithm to solve the generic multicast problem.
This solution merges the logic of \refalg{nofault} with some of the ideas introduced in earlier works (\textit{e.g.}, \cite{fritzke1998fault, schiper2007optimal}).

In what follows, we first detail the necessary additional assumptions on the system model to deal with process failures.
Then, we introduce our solution, argue about its correctness, and prove that its performance matches the results in \reftab{comparison}.

\subsection{Additional Assumptions}
\labsect{gmcast:assumptions}

\paragraph{Fault-tolerance}
To be fault-tolerant, we need to strengthen our computational model. % which is, so far, purely asynchronous.
A standard assumption \cite{defago2004total} is to assume that the system is partitionable into groups.
Internally, each such group is robust enough to solve consensus. 
The destination a message is then defined as the union of one (or more) of these consensus groups.

In detail, we assume a partition $\Gamma=\{g_1,\ldots,g_m\}$ into groups of $\Pi$ ensuring that:
\begin{itemize}
\item[-] \emph{non-empty}:
  $\forall g \in \Gamma \sep g \neq \emptyset$
\item[-] \emph{complete}:
  $\Pi = \bigcup_{g \in \Gamma} g$
\item[-] \emph{disjoint}:
  $\forall g_i,g_j \in \Gamma \sep i \neq j \implies g_i\inter g_j = \emptyset$
\item[-] \emph{covering}:
  $\forall m \in \mathcal{M} \sep \exists \Gr \subseteq \Gamma: m.d = \union_{g \in \Gr} ~g$
\item[-] \emph{reliable}:
  $\forall g \in \Gamma \sep \land$ consensus is solvable in $g $\\
  \hspace*{6.3em} $\land$~ a process in $g$ is correct
\end{itemize}

At the light of the above assumptions, generic broadcast is solvable in every group $g \in \Gamma$.
In what follows, we consider that such an instance exists per group.
For some group $g$, we note $\gbcast{g}(m)$ and $\gbdeliver{g}(m)$ respectively the operation to broadcast and deliver a message $m$ using generic broadcast in group $g$.

\paragraph{Conflict relation}
In what follows, we consider that messages convey a single operation and that they all access the same key.
That is, the conflict relation ($\sim$) abide by \refequation{rw}.
Later in the paper, we discuss this limitation and how it can be lifted.

\subsection{Algorithm}
\labsect{gmcast:algorithm}

Our solution is presented in \refalg{gmcast}.
Below, we provide an overview of the protocol, detail its internals, and later argue about its correctness.

\paragraph{Overview}
\refalg{gmcast} employs the variables listed in \reflines{gamcast:variables:1}{gamcast:variables:4}.
Some of them have the same usage as in \refalg{nofault}.
\begin{itemize}
\item[-] $K$:
  A logical clock.
\item[-] $PreviousMsgs$:
  The messages received since a conflict was detected.
\item[-] $Mem$:
  This map stores the necessary information to process each message.
  It plays the joint roles of $Pending$ and $Delivering$ in \refalg{nofault}.
\end{itemize}

\begin{figure}[t]
  \centering
  \small
    \begin{tikzpicture}[>=latex,scale=.75]

        \coordinate (A) at (1.3,7);
        \coordinate (B) at (1.3,-.5);
        
        \coordinate (C) at (4,7);
        \coordinate (D) at (4,-.5);
        \coordinate (E) at (5,7);
        \coordinate (F) at (5,-.5);
        
        \coordinate (G) at (6.5,7);
        \coordinate (H) at (6.5,-.5);
        \coordinate (I) at (7.5,7);
        \coordinate (J) at (7.5,-.5);
        
        \draw (A)--(B) (C)--(D) (E)--(F) (G)--(H) (I)--(J);
        \draw (A) node[above]{sender};
        
        \draw[decorate,decoration={brace,raise=1ex}]
            (3.85,7) -- (5.15,7) node[midway,yshift=1.5em]{$g_1$};
        
        \draw[decorate,decoration={brace,raise=1ex}]
            (6.35,7) -- (7.65,7) node[midway,yshift=1.5em]{$g_2$};

        \node (1) at (1.3,6.5) [circle, fill, inner sep=1pt] {};
        \node[] () [left = of 1,xshift=3.8em]  {$\gamcast(m)$};

        \node (2) at (1.3,6) [circle, fill, inner sep=1pt]{};
        \node[] () [left = of 2,xshift=3.8em]  {$\gbcast{g_1}$};
        \draw[->] ($(A)!.14!(B)$) to node[above left]{} ($(C)!.14!(D)$);
        \draw[->] ($(A)!.14!(B)$) to node[above left]{} ($(E)!.14!(F)$);

        \node (3) at (1.3,5.5) [circle, fill, inner sep=1pt]{};
        \node[] () [left = of 3,xshift=3.8em]  {$\gbcast{g_2}$};
        \draw[->] ($(A)!.215!(B)$) to node[above left]{} ($(G)!.215!(H)$);
        \draw[->] ($(A)!.215!(B)$) to node[above left]{} ($(I)!.215!(J)$);        

        \filldraw [fill=RoyalBlue!20, draw=black] (3.9,5.85) rectangle (5.1,5.05);
        \filldraw [fill=OliveGreen!20, draw=black] (6.4,5.3) rectangle (7.6,4.5);

        \node (4) at (5,4.3) [circle, fill=RoyalBlue!20, draw=black, inner sep=1pt]{};
        \node (5) at (4,4) [circle, fill=RoyalBlue!20, draw=black, inner sep=1pt]{};
        \node[] () [left = of 5,xshift=3.8em]  {$\gbdeliver{g_1}$};

        \node (6) at (5,3.8) [circle, fill=RoyalBlue!20, draw=black, inner sep=1pt]{};        
        \node (7) at (4,3.5) [circle, fill=RoyalBlue!20, draw=black, inner sep=1pt]{};
        \node[] () [left = of 7, xshift=3.8em]  {$K \gets 42$};

        \node (8) at (5,3.3) [circle, fill=RoyalBlue!20, draw=black, inner sep=1pt]{};
        \draw[->] (8) -- (6.5,2.8);
        \draw[->] (8) -- (7.5,2.46);        
        \node (9) at (4,3) [circle, fill=RoyalBlue!20, draw=black, inner sep=1pt]{};
        \node[] () [left = of 9, xshift=3.8em]  {$Send$};
        \draw[->] (9) -- (6.5,2.22);
        \draw[->] (9) -- (7.5,1.9);

        \node (10) at (6.5,4.1) [circle, fill=OliveGreen!20, draw=black, inner sep=1pt]{};
        \node (11) at (7.5,3.9) [circle, fill=OliveGreen!20, draw=black, inner sep=1pt]{};
        \node[] () [right = of 11,xshift=-3.8em]  {$\gbdeliver{g_2}$};

        \node (12) at (6.5,3.6) [circle, fill=OliveGreen!20, draw=black, inner sep=1pt]{};
        \node (13) at (7.5,3.4) [circle, fill=OliveGreen!20, draw=black, inner sep=1pt]{};
        \node[] () [right = of 13,xshift=-3.8em]  {$K \gets 3$};

        \node (14) at (6.5,3.1) [circle, fill=OliveGreen!20, draw=black, inner sep=1pt]{};
        \draw[->] (14) -- (4,2.2);
        \draw[->] (14) -- (5,2.55);
        \node (15) at (7.5,2.9) [circle, fill=OliveGreen!20, draw=black, inner sep=1pt]{};
        \draw[->] (15) -- (4,1.6);
        \draw[->] (15) -- (5,1.97);
        \node[] () [right = of 15,xshift=-3.8em]  {$Send$};
        
        \filldraw [fill=OliveGreen!20, draw=black] (6.4,1.8) rectangle (7.6,1);

        \node (16) at (4,1.1) [circle, fill=RoyalBlue!20, draw=black, inner sep=1pt]{};
        \node (17) at (5,1.5) [circle, fill=RoyalBlue!20, draw=black, inner sep=1pt]{};
        \node[] () [left = of 16, xshift=3.8em]  {$\gdeliver(m)$};        

        \node (18) at (6.5,.6) [circle, fill=OliveGreen!20, draw=black, inner sep=1pt]{};
        \node (19) at (7.5,.4) [circle, fill=OliveGreen!20, draw=black, inner sep=1pt]{};
        \node[] () [right = of 19,xshift=-3.8em]  {$K \gets 42$};
        
        \node (20) at (6.5,0) [circle, fill=OliveGreen!20, draw=black, inner sep=1pt]{};
        \node (21) at (7.5,-.2) [circle, fill=OliveGreen!20, draw=black, inner sep=1pt]{};
        \node[] () [right = of 21,xshift=-3.8em]  {$\gdeliver(m)$};
    
    \end{tikzpicture}
    \caption{
      \labfigure{scenario}
      A run of \refalg{gmcast}.
    }
\end{figure}

\refalg{gmcast} follows the standard transformation invented in \cite{GuerraouiS97}.
Such a transformation makes Skeen's approach fault-tolerant by considering subsets of the destination group that are large enough to agree on a timestamp proposal.
This is the purpose of the assumptions made in \refsect{gmcast:assumptions}.

In detail, for each destination group $m.d$, there exists a partition into a set of groups $\Gr$ such that $m.d = \union_{g \in \Gr}~g$.
In each group $g$, processes agree on a timestamp proposal for message $m$.
Agreement takes place using generic broadcast in group $g$.
Then, as in \refalg{nofault}, the final timestamp $ts_f$ of message $m$ is the highest of such proposals.
To deliver message $m$, it is necessary to know all the messages with a lower final timestamp than $m$.
For this, each group $g$ bumps its clock using generic broadcast a second time.
In case a group has a clock high enough, it simply does nothing.

\reffigure{scenario} illustrates the above logic.
In this failure-free scenario, two groups $g_1$ and $g_2$ partition the destination group of message $m.d$.
Each group computes a timestamp proposal using generic broadcast:
group $g_1$ proposes $42$, while $g_2$ suggests $3$.
This computation occurs for $g_1$ in the \callout[RoyalBlue!20]{blue box}.
In parallel $g_2$ also reaches an agreement on the proposal using the \callout[OliveGreen!20]{green box}.
Then, the two groups exchange their proposal and set the final timestamp of $m$ to $42$.
At group $g_2$, the clock is lower than $42$.
Thus, $g_1$ can immediately deliver the message.
Group $g_2$ has initially proposed $3$.
As a consequence, it needs to bump its clock using generic broadcast a second time (\callout[OliveGreen!20]{green box} at the bottom of \reffigure{scenario}).

\begin{algorithm}[t]
    \caption{Fault-tolerant generic multicast -- code at $p$} 
    \labalg{gmcast}

    % Removes the `procedure` prefix, we write what we want.
    % Add the -1mm to remove the automatic padding added with an empty string.
    \algrenewcommand\algorithmicprocedure{\hspace{-1mm}}
    \begin{algorithmic}[1]
        \State \textbf{Variables:} \labline{gamcast:variables:1}
        \State $K \gets 0$ \labline{gamcast:variables:2}
        \State $Mem \gets \emptyset$ \labline{gamcast:variables:3}
        \State $PreviousMsgs \gets \emptyset$ \labline{gamcast:variables:4}
        \vspace{.5em}

        \Procedure{\textbf{procedure} \gamcast}{m}\labline{gmcast:amcast:1}
            \State \textbf{let} $\Gr \subseteq \Gamma$ \textbf{such that} $m.d=\union_{g \in \Gr}~g$ \labline{gmcast:amcast:2} \Comment{$\Gr$ is unique} 
            \ForAll{$g \in \mathcal{G}$} \labline{gmcast:amcast:3}
                \State $\gbcast{g}(\msgBegin(m))$ \labline{gmcast:amcast:4}
            \EndFor
        \EndProcedure
        \vspace{.5em}
        
        \Procedure{\textbf{when} \gbdeliver{g}}{\msgBegin(m)} \labline{gmcast:begin:1}
            \If{$\exists~m' \in PreviousMsgs \sep m\sim m'$} \labline{gmcast:begin:2}
                \State $K \gets K + 1$ \labline{gmcast:begin:3}
                \State $PreviousMsgs \gets \emptyset$ \labline{gmcast:begin:4}
            \EndIf
            \State $PreviousMsgs \gets PreviousMsgs \cup \{m\}$ \labline{gmcast:begin:5}
            \State $Mem \gets Mem \union \proposeMsg(m, K)$ \labline{gmcast:begin:6}
            \ForAll{$q \in m.d \setminus g \union \{p\}$} \labline{gmcast:begin:7} 
                \State $Send\langle \proposeMsg(m, K), q \rangle$ \labline{gmcast:begin:8} 
            \EndFor
        \EndProcedure
        \vspace{.5em}

        \Procedure{\textbf{when} $Receive\langle \proposeMsg(m, \any), \any \rangle$}{} \labline{gmcast:decide:1}
            \Statex \textbf{pre:} $m.d = \union~\{g \in \Gamma : \exists q \in g \sep Receive \langle \proposeMsg(m, \any), q \rangle \}$ \labline{gmcast:decide:2}
            \State $ts_f \gets max( \{t ~:~ Receive \langle \proposeMsg(m, t) \rangle \} )$ \labline{gmcast:decide:3}
            \State $Mem \gets Mem \union \deliverMsg(m, ts_f) \setminus \proposeMsg(m, \any)$ \labline{gmcast:decide:4}
            \If{$K < ts_f$} \labline{gmcast:decide:5}
                \State \textbf{let} $g \in \Gamma$ \textbf{such that} $p \in g$ \labline{gmcast:decide:6} \Comment{$g$ is unique}
                \State $\gbcast{g}(\msgAdvance(ts_f))$ \labline{gmcast:decide:7}
            \EndIf
        \EndProcedure
        \vspace{.5em}

        \Procedure{\textbf{when} $\gbdeliver{g}(\msgAdvance(t))$}{} \labline{gmcast:bump:1}
            \If{$t > K$} \labline{gmcast:bump:2}
                \State $K \gets t$ \labline{gmcast:bump:3}
                \State $PreviousMsgs \gets \emptyset$ \labline{gmcast:bump:4}
            \EndIf
        \EndProcedure
        \vspace{.5em}

        \Procedure{\textbf{procedure} \gdeliver}{m} \labline{gmcast:deliver:1}
            \Statex \textbf{pre:} $\land~ \exists t \leq K \sep \deliverMsg(m,t) \in Mem$ 
            \Statex \hspace{1.8em} $\land~ \forall m',t' \sep \proposeMsg|\deliverMsg(m',t') \in Mem$
            \Statex \hspace{6em} $\implies (m \not\sim m' \vee (m,t) < (m',t'))$ \labline{gmcast:deliver:2}
            \State $Mem \gets Mem \setminus \deliverMsg(m, t)$ \labline{gmcast:deliver:3}
        \EndProcedure
    \end{algorithmic}
\end{algorithm}

\paragraph{Internals}
A message $m$ starts its journey once it is multicast at \refline{gmcast:amcast:1}.
When this happens, the sender $p$ first computes the partition $\Gr$ of the destination group $m.d$.
Then, for each group $g \in \Gr$, $p$ broadcasts a $\beginMsg(m)$ message to $g$.
This disseminates message $m$ to its destination group to compute timestamp proposals.
If $p$ fails in the loop at \refline{gmcast:amcast:3}, a recovery is needed.
We will explain the recovery procedure later in the section.

Upon receiving a $\msgBegin(m)$ message, a process $p$ looks for conflicts in the $PreviousMsgs$ set (\refline{gmcast:begin:2}).
If a conflict exists, $p$ increments its clock and clears $PreviousMsgs$ (\reflinestwo{gmcast:begin:3}{gmcast:begin:4}).
Then, $m$ is added to $PreviousMsgs$ (\refline{gmcast:begin:5}).
These steps are similar to the ones taken in \refalg{nofault}, and they are needed to ensure that conflicting messages end up with different final timestamps.

The timestamp proposal for $m$ is set to the value of the clock $K$.
This proposal is then stored at the local process $p$ in variable $Mem$ (\refline{gmcast:begin:6}).
One can show that the processes in group $g$ compute the exact same proposal for $m$.
As a consequence, there is no need to disseminate the group's proposal internally.
In \reflines{gmcast:begin:7}{gmcast:begin:8}, $p$ sends the proposal to all the other processes in the destination group.
It also sends the proposal to itself to move the message to the decision phase.

Deciding the final timestamp for message $m$ happens in the handler at \reflines{gmcast:decide:1}{gmcast:decide:7}.
The handler has a precondition that requires that a timestamp proposal is known for each group in the partition of the destination group.
The final timestamp of $m$ is stored in variable $ts_f$ and set to the highest such proposal (\refline{gmcast:decide:3}).
If a (consensus) group $g$ is in late, that is, its clock is lower than $ts_f$, it must bump its clock.
This ensures that all the messages lower than $ts_f$ are known locally when $m$ is ready to be delivered (\refline{gmcast:deliver:2}).
Clock synchronization happens in \reflines{gmcast:decide:5}{gmcast:bump:4}.
It broadcasts an $\msgAdvance(ts_f)$ message within the local group.

Within a group $g$, clocks are synchronized using generic broadcast.
This permits processes to make the same timestamp proposal for a message addressed to the group.
More formally, if $p$ and $q$ in $g$ \gbdeliver{g} the same messages, then variable $K$ is in the same state.
For this, we need to carefully define conflicts in the generic broadcast primitive to ensure the group behaves as a unity.
\refequation{gbcast-conflict} specifies how this relation ($\sim_{gb}$) is set to satisfy the requirements for two (distinct) messages $m$ and $m'$.
\begin{equation}
  \labequation{gbcast-conflict}
  \begin{array}{l@{~}l@{~}l}
    m~\sim_{gb}~m' \equaldef & \lor & m = \msgAdvance(\any) \\
    & \lor & m' = \msgAdvance(\any) \\
    & \lor & m = \msgBegin(x) \land m' = \msgBegin(y) \land x~\sim~y
  \end{array}
\end{equation}

Message $m$ ends its journey when the handler at \refline{gmcast:deliver:1} triggers.
This happens when the clock is higher than $m$'s final timestamp.
As in \refalg{nofault}, such a precondition ensures that every message with a lower final timestamp is known locally.
(This precondition can be added to \refalg{nofault} but it is vacuously true.)
The rest of the handler is identical to the failure-free case, that is message delivery occurs in the timestamp order.

\subsection{Failure Recovery}
\labsect{gmcast:recovery}

\begin{algorithm}[t]
    \caption{Recovery mechanism}
    \labalg{recovery}

    \begin{algorithmic}[1]
      \Procedure{recover}{$m$}
      \Statex \textbf{pre:} $\land~ \exists ~m \sep \proposeMsg(m,\any) \in Mem$ 
      \Statex \hspace{1.8em} $\land~ m.src \in \mathcal{D}$ \labline{gmcast:recovery:1}
      \ForAll{$g \in \Gamma : g \subseteq m.d$} \labline{gmcast:recovery:2}
          \If{$\not\exists p \in g \sep Receive\langle \msgPropose(m,\any), p \rangle$} \labline{gmcast:recovery:3}
              \State $GB\text{-}Send_g(\texttt{Begin}(m))$ \labline{gmcast:recovery:4}
          \EndIf
      \EndFor
      \EndProcedure
    \end{algorithmic}
\end{algorithm}

\refalg{gmcast} is always safe.
However, the algorithm would block when a process fails without a recovery procedure.
In detail, if the sender crashes before it finishes the loop at \refline{gmcast:amcast:3}, some processes in the destination group may never compute a timestamp proposal.
To avoid this, we need a recovery mechanism.
This mechanism is detailed in \refalg{recovery}.

\refalg{recovery} makes use of an unreliable failure detector, denoted $\mathcal{D}$.
This failure detector returns a list processes that are suspected to have failed \cite{chandra1996unreliable}.
Failure detection $\mathcal{D}$ must ensure that a faulty process cannot remain unsuspected forever.
Notice that this does not require any assumption on the underlying system.
In particular, returning $\Pi$ is a valid implementation of failure detector $\mathcal{D}$.
Better failure detection is interesting for performance, but does not impact correctness nor liveness.

At a process $p$, the recovery mechanism in \refalg{recovery} works as follows:
It triggers when for some message $m$, $m$ is pending ($\msgPropose(m, \any) \in Mem$) and its sender is suspected ($m.src \in \mathcal{D}$).
In such a case, for every group $g$ partitioning the destination, if $p$ did not receive a timestamp proposal from $g$, $p$ broadcasts a $\msgBegin(m)$ message to $g$ (\reflines{gmcast:recovery:2}{gmcast:recovery:3}).
When process $p$ is correct, the missing groups deliver a $\msgBegin(m)$ message at \refline{gmcast:begin:1}, and then send their timestamp proposals.
Otherwise, recovery is attempted again by another process, until it succeeds eventually (which happens eventually from the assumptions in \refsect{gmcast:assumptions}).

\subsection{Correctness}
\labsect{gmcast:correctness}

In what follows, we sketch the correctness of \refalg{gmcast}.
\iflong
A complete proof is available in the appendix.
\else
A complete proof appears in the long version of this paper \cite{long-version}.
\fi
\refalg{gmcast} was also verified using \tlaplus.
The specification can be found online \cite{tla-specification}.

\paragraph{Safety}
Integrity follows from the pseudo-code of \refalg{gmcast}.
In detail, a message $\msgBegin(m)$ is delivered from generic broadcast at most once (\refline{gmcast:begin:1}).
Message $m$ starts as a $\msgPropose(m, \any)$ message in variable $Mem$, then it is transformed into a $\msgDeliver(m, \any)$ message.
Once $m$ escapes variable $Mem$, it is delivered at the local process and never re-appears again.

Ordering is proved using the following safety invariant (\SAF):
For any two conflicting messages $m$ and $m'$ delivered in the run with respectively timestamps $t$ and $t'$, if $m \delOrderOf{p} m'$ then $(m,t) < (m',t')$.
To prove invariant (\SAF), two intermediary invariants are obtained.
First, (\SAFa) processes in the destination group agree on the final timestamp of a message.
This result is obtained by using the fact that (consensus) groups agree on the timestamp proposals.
Second, (\SAFb) a process delivers a message only if it has already delivered all the conflicting messages with a lower timestamp.
Such an invariant comes from the fact that the logical clock is monotonically growing, and that it ticks every time a new conflict is detected.

\paragraph{Liveness}
Consider some message $m$ that is either multicast by a correct process, or delivered somewhere.
To prove termination (\LIV), we decompose it into three secondary invariants.
First, (\LIVa) every correct process in $m.d$ eventually stores a $\msgDeliver(m, \any)$ message in variable $Mem$.
Once this holds, invariant (\LIVb) establishes that the clock of a correct process is eventually synchronized to $m$'s final timestamp.
Invariant (\LIVc) states that if a correct process stores a $\msgDeliver(m, \any)$ message in the $Mem$ set, then it eventually delivers $m$.
This third invariant is proved with the help of an appropriate potential function.
The conjunction of \LIVa, \LIVb, and \LIVc ensures that \LIV holds.

\subsection{Performance}
\labsect{gmcast:performance}

\refalg{gmcast} ensures the two implementation properties (strictness and minimality) listed in \refsect{problem:properties}.
In what follows, we prove that the algorithm is matching the performance listed in \reftab{comparison}.
Performance is given for non-faulty runs, that is, runs during which there are no failures and no failure suspicions.
In practice, this corresponds to the common case for real systems.
Finally, we conclude the section with a discussion about metadata management.

When measuring latency, we consider that \refalg{gmcast} makes use of the fastest generic broadcast known to date (\textit{e.g.}, \cite{ryabinin2024swiftpaxos,curp,sutra2011fast}).
Such algorithms ensure that in a failure-free run, a message is delivered after two message delays if there are no concurrent conflicting messages, and three otherwise.

Hereafter, we write $R$ the set of failure-free runs for \refalg{gmcast}.
Among these runs, $R_{\mathit{cf}} \subset R$ (respectively, $R_{\mathit{co}} \subset R$) are the conflict-free (resp., contention-free) ones.
Notice that there is no ordering relation between $R_{\mathit{cf}}$ and $R_{\mathit{co}}$.
We examine in order each of these classes to establish the results in \reftab{comparison}.

\paragraph{Conflict-free}
For starters, we consider the class of conflict-free runs.
This class is illustrated in \reffigure{scenario} where $g_1$ delivers the message after just 3 message delays:
The \callout[RoyalBlue!20]{blue box} for generic broadcast takes~2 message delays.
It is followed with the exchange of timestamp proposals, then delivery of the message at $g_1$.

For some run $r$, we note $dl_r(m)$ the delivery latency of a message $m$ in $r$.
    We can establish the following result.

\begin{proposition}
  \labprop{conflict-free}
  In every run $r \in R_{\mathit{cf}}$, $dl_r(m) \leq 3$ for any $m$.
\end{proposition}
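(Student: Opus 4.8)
The plan is to trace a single message $m$ through the algorithm in a conflict-free run and bound the number of message delays incurred at each phase, then argue that delivery is not blocked by the precondition at \refline{gmcast:deliver:2}. Since $r \in R_{\mathit{cf}}$, no two messages conflict, so the $PreviousMsgs$ conflict checks at \reflinestwo{gmcast:begin:2}{gmcast:begin:3} never fire, and by \refequation{gbcast-conflict} the only generic-broadcast conflicts that remain are those involving $\msgAdvance$ messages. First I would observe that the phase in \reflines{gmcast:amcast:1}{gmcast:amcast:4} is local: multicasting $m$ immediately triggers $\gbcast{g}(\msgBegin(m))$ in each destination group $g \in \Gr$, costing zero message delays before the broadcast begins.

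Next I would account for the generic-broadcast phase that delivers $\msgBegin(m)$ at \refline{gmcast:begin:1}. Invoking the performance assumption stated in \refsect{gmcast:performance} — that the underlying generic broadcast delivers in two message delays when there are no concurrent conflicting messages — this phase contributes exactly $2$ delays. Here I would need the conflict-free hypothesis to guarantee the fast path of generic broadcast is taken: because distinct $\msgBegin(\cdot)$ messages do not conflict under $\sim_{gb}$ in a conflict-free run, no concurrent conflicting broadcast slows this step. Upon $\gbdeliver{g}(\msgBegin(m))$, each group sets its proposal to $K$ (no clock bump, since there is no conflict) and sends $\proposeMsg(m,K)$ to the rest of $m.d$ at \reflines{gmcast:begin:7}{gmcast:begin:8}; this point-to-point exchange is the third message delay, after which the decision handler at \refline{gmcast:decide:1} fires once proposals from every group in the partition have arrived.

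The crux is then to show that after the third delay, $m$ is immediately deliverable and that the clock-bump branch at \reflines{gmcast:decide:5}{gmcast:decide:7} does not add latency on the critical path. I would argue that in a conflict-free run the delivery precondition at \refline{gmcast:deliver:2} is satisfied as soon as $\msgDeliver(m, ts_f)$ enters $Mem$ with $ts_f \leq K$: the universally quantified clause is vacuous because every other message $m'$ in $Mem$ satisfies $m \not\sim m'$. The delicate point — and the main obstacle — is the $\exists t \leq K$ conjunct: a group whose clock was below $ts_f$ must first $\gbcast{g}(\msgAdvance(ts_f))$ and $\gbdeliver{g}$ it to raise $K$, which naively looks like extra delays. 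I would resolve this by focusing, as \reffigure{scenario} does, on the group $g$ that itself proposed the maximum, so that $ts_f = K$ already holds there and the branch at \refline{gmcast:decide:5} is skipped; that group delivers $m$ at the end of the third delay, establishing $dl_r(m) \leq 3$.

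The remaining care concerns the definition of delivery latency as delivery \emph{everywhere}, not merely at the fastest group, so I would need to confirm the lagging groups also deliver within three delays. The key observation is that the $\msgAdvance$ broadcast runs concurrently with, not after, the proposal exchange: the decision handler fires on all processes at the same logical round, and the footnote's definition following \cite{pacheco2023primcast} together with the conflict-free structure means the $\msgAdvance$ generic broadcast overlaps the third delay rather than serializing behind it. Establishing that this overlap keeps the total at three — rather than requiring a fourth delay for the lagging group's clock synchronization to complete — is the subtle accounting I expect to demand the most justification, and it is exactly where the conflict-free assumption (ensuring $\msgAdvance$ takes generic broadcast's fast path and no convoy forms) does the essential work.
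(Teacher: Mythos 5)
Your first two paragraphs match the paper's accounting: two delays for generic broadcast of $\msgBegin(m)$ (fast path, since no $\msgBegin$ messages conflict under $\sim_{gb}$ in a conflict-free run), one more delay for the $\proposeMsg$ exchange, so $ts_f$ is known everywhere after three delays. The gap is in how you handle the $\exists t \leq K$ conjunct. You treat the existence of a lagging group (one whose clock is below $ts_f$) as a real possibility and offer two fixes, neither of which works. Focusing on the group that proposed the maximum, as you concede, only bounds the latency at \emph{that} group, while $dl_r(m)$ is defined as the time until delivery \emph{everywhere}. And the claimed rescue --- that the $\msgAdvance$ broadcast ``overlaps the third delay rather than serializing behind it'' --- is false: the branch at \reflines{gmcast:decide:5}{gmcast:decide:7} can only execute after the handler's precondition at \refline{gmcast:decide:2} is met, i.e., after all proposals have arrived at the end of the third delay, so a genuinely lagging group would pay two further delays for $\gbdeliver{g}(\msgAdvance(ts_f))$, giving five in total. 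That five-delay path is precisely the collision-free bound of \refprop{collision-free}; if your overlap argument were sound, it would improve that proposition as well, which it cannot.

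The missing idea --- and the paper's actual argument --- is that in a conflict-free run there are no lagging groups at all, because every clock stays at $0$ for the entire run. The condition at \refline{gmcast:begin:2} never holds (no two messages conflict), so $K$ is never incremented there; and no $\msgAdvance$ is ever sent, since every proposal is made with $K=0$, hence $ts_f = \max(\ldots) = 0$ and the test $K < ts_f$ at \refline{gmcast:decide:5} fails at every process. Consequently $\msgDeliver(m,0)$ enters $Mem$ with $0 \leq K$ already satisfied everywhere, the universally quantified clause at \refline{gmcast:deliver:2} is vacuous (nothing conflicts with $m$), and every process in $m.d$ delivers $m$ immediately after the third delay. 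Your confusion likely stems from \reffigure{scenario}, where the groups hold clocks $42$ and $3$: that scenario is collision-free but not conflict-free, since nonzero clocks can only arise from earlier conflicts.
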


\begin{proof}
  After two message delays, each process in $m.d$ delivers $\msgBegin(m)$ at \refline{gmcast:begin:1}.
  There is no conflict in run $r$.
  Hence, variable $K$ equals $0$ and the condition at \refline{gmcast:begin:2} is false.
  After an additional message delay, the timestamp proposals are exchanged at \reflines{gmcast:begin:7}{gmcast:decide:2}. 
  Upon collecting such proposals, the final timestamp $ts_f=0$ is known at \refline{gmcast:decide:4}.
  Hence, $\msgDeliver(m, 0)$ is added to $Mem$.
  No message conflicts with $m$.
  Thus the precondition at \refline{gmcast:deliver:2} is valid and $m$ is delivered after 3 message delays.
\end{proof}

\paragraph{Collision-free}
Recall from \refsect{problem:latency} that a collision-free run is a run in which no two messages are sent concurrently to the same location.
This means that when a message $m$ is multicast, any message $m'$ having a common process $p \in m.d \inter m'.d$ is delivered at that process $p$.
In \cite{gotsman2019white}, the authors introduce the notion of \emph{commit latency} for Skeen-like algorithms.
The commit latency measures the time it takes for a message to get assigned a final timestamp.
For collision-free runs, the commit latency corresponds to the delivery latency (Theorem~3 in \cite{gotsman2019white}).
\refprop{collision-free} establishes that \refalg{gmcast} takes five message delays to commit a message in a collision-free run.
\begin{proposition}
  \labprop{collision-free}
  In every run $r \in R_{\mathit{co}}$, $dl_r(m) \leq 5$ for any $m$.
\end{proposition}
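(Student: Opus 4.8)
The plan is to decompose the delivery of $m$ into the three phases of \refalg{gmcast} and bound each one using collision-freedom. The crucial fact I would establish first is a \emph{non-overlap} property: in a run $r \in R_{\mathit{co}}$, for every message $m'$ that shares a destination with $m$ (in particular every $m'$ with $m' \sim m$, since conflicts force a common recipient, and every $m'$ that ever triggers an $\msgAdvance$ in a group $g$ partitioning $m.d$, since such a $g$ satisfies $g \subseteq m'.d$), the entire lifetime of $m'$ is disjoint in real time from that of $m$. Indeed, $m.d \inter m'.d \neq \emptyset$ together with collision-freedom forbids $m$ and $m'$ from being concurrent, so either $m'$ is delivered everywhere before $m$ is multicast, or $m$ is delivered everywhere before $m'$ is multicast. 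Because every $\msgBegin$ and $\msgAdvance$ generic broadcast of a message completes no later than that message is delivered at the corresponding group (the $\msgBegin$ broadcast is its first step, and a group's $\msgAdvance$ broadcast must complete to bump $K$ and enable delivery there), none of the $\sim_{gb}$-conflicting broadcasts of $m'$ in $g$ overlaps with the broadcasts of $m$ in $g$.

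With this in hand, I would bound the $\msgBegin$ phase. For each group $g$ partitioning $m.d$, the broadcast $\gbcast{g}(\msgBegin(m))$ $\sim_{gb}$-conflicts (via \refequation{gbcast-conflict}) only with $\msgAdvance$ messages and with $\msgBegin(y)$ for $y \sim m$; by the non-overlap property none of these is concurrent with $\msgBegin(m)$ in $g$. Hence the fast (two-delay) path of generic broadcast applies and $\msgBegin(m)$ is $\gbdeliver{g}$-ed after two message delays. As in the proof of \refprop{conflict-free}, one additional delay suffices to exchange proposals at \reflines{gmcast:begin:7}{gmcast:decide:2} and compute $ts_f$ at \refline{gmcast:decide:3}, for a running total of three delays; any group already satisfying $K \geq ts_f$ delivers $m$ immediately, matching the three-delay case.

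It remains to bound the $\msgAdvance$ phase for a lagging group $g$ (one with $K < ts_f$ at \refline{gmcast:decide:5}), which accounts for the extra two delays. The process broadcasts $\gbcast{g}(\msgAdvance(ts_f))$ at \refline{gmcast:decide:7}. Although $\msgAdvance$ $\sim_{gb}$-conflicts with \emph{every} other group message, the non-overlap property shows that no $\msgBegin$ or $\msgAdvance$ of another message is being broadcast in $g$ concurrently: each such broadcast belongs to a message sharing destination $g$ with $m$, hence not concurrent with $m$. Thus this generic broadcast again takes the two-delay fast path, and $\gbdeliver{g}(\msgAdvance(ts_f))$ sets $K \geq ts_f$ at \reflinestwo{gmcast:bump:2}{gmcast:bump:3}, bringing the total to five delays.

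Finally I would discharge the delivery precondition at \refline{gmcast:deliver:2}. Its first conjunct, $\exists t \leq K : \msgDeliver(m,t) \in Mem$, now holds with $t = ts_f$. For the second conjunct, any $m'$ with $m' \sim m$ shares a destination with $m$ and is, by the non-overlap property, either already delivered (hence absent from $Mem$) or not yet multicast (hence also absent from $Mem$), so no conflicting lower-order entry blocks delivery. Consequently $m$ is $\gdeliver$-ed within five message delays, giving $dl_r(m) \leq 5$. I expect the main obstacle to be the non-overlap property itself, specifically the argument that the all-conflicting $\msgAdvance$ broadcasts of \emph{other} messages never collide with $m$'s broadcasts inside a shared group; once that is secured, both generic broadcasts fall on their two-delay fast path and the bound follows by simple addition.
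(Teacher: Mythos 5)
Your proof is correct and follows essentially the same route as the paper's: two message delays for the generic broadcast of $\msgBegin(m)$ (fast path, by collision-freedom), one delay for exchanging timestamp proposals, two more delays for the $\msgAdvance$ broadcast in any lagging group, and immediate delivery thereafter because no conflicting entry can be pending in $Mem$. Your explicit non-overlap lemma is just a formalization of what the paper asserts parenthetically (``because the run is collision-free''), so the two arguments coincide in substance.
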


\begin{proof}
  The proof starts similarly to the one of \refprop{conflict-free}.
  At a process, the final timestamp of a message $m$ is known after three message delays:
  Generic broadcast ensures that $\msgBegin(m)$ is delivered after two message delays (because the run is collision-free).
  Then, we add one more message delay due to the exchange of timestamp proposals.
  Let $t$ be the final timestamp of $m$.
  To deliver $m$, the precondition at \refline{gmcast:deliver:2} must be true.
  This precondition demands that the logical clock ($K$) is higher than $t$.
  Once this holds, message $m$ is delivered right away because there are no pending messages locally (as the run is collision-free).
  According to the pseudo-code at \refline{gmcast:decide:5}, either the precondition is true at the time $\msgDeliver(m,t)$ is added to $Mem$ (\refline{gmcast:decide:4}), or an $\msgAdvance(t)$ message is generic broadcast (\refline{gmcast:decide:7}).
  In the latter case, after two more message delays, the code in \reflines{gmcast:bump:1}{gmcast:bump:4} triggers.
  This ensures that the precondition at \refline{gmcast:deliver:2} is true later on.
\end{proof}

To illustrate the above result, let us go back to \reffigure{scenario}.
This figure depicts a collision-free scenario for \refalg{gmcast}.
In particular, because $g_2$ has a clock smaller than the decided timestamp ($42$), it needs to bump its clock.
This computation happens using generic broadcast (second \callout[OliveGreen!20]{green box} in \reffigure{scenario}) and takes two more message delays.
Once the clock is bumped (bottom of the figure), the message is delivered.
In total, it takes $g_2$ five message delays to deliver the message in this scenario.

\paragraph{Failure-free}
The failure-free latency is defined as the clock update latency plus the commit latency in failure-free runs \cite{gotsman2019white}.
\reflem{failure-free:commit} proves that the commit latency of a message is seven for this class of runs.

\begin{lemma}
  \lablem{failure-free:commit}
  For any $r \in R$, the commit latency in $r$ is 7.
\end{lemma}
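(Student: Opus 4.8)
The plan is to reuse the skeleton of the proof of \refprop{collision-free}, upgrading every fast generic-broadcast path to its slow counterpart, since by definition a run in $R$ may contain concurrent conflicting messages whereas one in $R_{\mathit{co}}$ may not. As in the collision-free analysis, I read commit latency as the time until $m$'s final timestamp $ts_f$ is decided \emph{and} made stable, i.e. until every relevant group has $K \geq ts_f$; this is exactly the quantity that unlocks the precondition at \refline{gmcast:deliver:2} once convoy effects are set aside. I would isolate the two generic broadcasts on the critical path — the dissemination of $\msgBegin(m)$ and the later $\msgAdvance(ts_f)$ — together with the single point-to-point round exchanging per-group proposals, and charge them $3+1+3=7$ message delays.

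First I would bound the time to decide $ts_f$. The sender issues $\gbcast{g}(\msgBegin(m))$ for each group $g$ partitioning $m.d$ (\reflines{gmcast:amcast:3}{gmcast:amcast:4}). Because the run contains a message conflicting with $m$, relation $\sim_{gb}$ of \refequation{gbcast-conflict} places these broadcasts on the slow path, so $\msgBegin(m)$ is delivered in each group via $\gbdeliver{g}$ after at most three message delays instead of two. Each delivering process then forwards its proposal with a plain $Send$ (\reflines{gmcast:begin:7}{gmcast:begin:8}); since processes of a group compute identical proposals, a single proposal per group reaches the decider after one more delay and satisfies the precondition at \refline{gmcast:decide:2}. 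Thus $ts_f$ is set at \refline{gmcast:decide:3} after at most $3+1=4$ delays.

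Next I would account for stabilisation. If $K \geq ts_f$ already holds, the timestamp is committed at delay $4 \leq 7$ and nothing more is needed. Otherwise the test at \refline{gmcast:decide:5} passes and $\gbcast{g}(\msgAdvance(ts_f))$ is issued at \refline{gmcast:decide:7}, hence submitted at delay $4$. The crucial observation is that, by \refequation{gbcast-conflict}, an $\msgAdvance$ message conflicts with \emph{every} generic-broadcast message; as the run of $R$ offers a concurrent conflicting message, this broadcast also runs on the slow path and is delivered after at most three further delays (\reflines{gmcast:bump:1}{gmcast:bump:4}), setting $K \geq ts_f$ at delay $7$. One delivery suffices because \refline{gmcast:bump:3} raises $K$ to the largest timestamp observed. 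Combining both cases gives commit latency at most $7$; I would then exhibit tightness by extending the run of \reffigure{scenario} with a concurrent conflicting message that forces both the $\msgBegin$ and the $\msgAdvance$ broadcasts onto their slow paths, realising all seven delays.

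The hard part will be the bookkeeping that separates commit latency from the clock-update (convoy) latency counted separately toward the eventual $11$-delay failure-free bound. I must justify that no extra round is chargeable to the commit itself: $\msgAdvance(ts_f)$ is broadcast the instant $ts_f$ is decided, a single such broadcast already yields $K \geq ts_f$, and any further wait imposed on $m$ by \emph{other} concurrent conflicting messages holding earlier timestamp slots is delivery-ordering delay rather than commit delay. I would also pin down that ``at most three delays'' is precisely the slow-path guarantee assumed of the underlying generic broadcast, so neither critical-path broadcast can exceed it, and that the proposal exchange truly costs one delay because one proposal per group already meets the precondition at \refline{gmcast:decide:2}.
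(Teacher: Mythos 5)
Your proof is correct and takes essentially the same approach as the paper: the paper's own argument simply reuses the proof of \refprop{collision-free}, replacing the two-delay (fast-path) generic broadcast bound with three delays due to possible contention, which yields exactly your $3+1+3=7$ decomposition (slow \msgBegin broadcast, one delay of \msgPropose exchange, slow \msgAdvance broadcast). The only difference is presentational, in that you spell out the delay accounting and a tightness scenario that the paper leaves implicit.
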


\begin{proof}
  The proof is almost identical to the one used in \refprop{collision-free}.
  The sole difference is the time it takes for generic broadcast to deliver a message.
  Here, we count three message delays instead of two.
  This comes from the possible contention in the primitive, leading to an additional message delay in the worst case.
\end{proof}

The \emph{clock update latency} measures the (worst-case) number of message delays to wait before all the messages with a lower timestamp are known at the destination group once the final timestamp is computed.
Equivalently, using the formulation in \cite{pacheco2023primcast}, this is the maximum delay after which no (consensus) group will assign another message a local timestamp smaller than the final timestamp.
For \refalg{gmcast}, \reflem{failure-free:clock-update} tells us that this demands to wait for four message delays.

\begin{lemma}
  \lablem{failure-free:clock-update}
  For any $r \in R$, the clock-update latency in $r$ is 4.
\end{lemma}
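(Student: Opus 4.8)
The plan is to track, from the instant $m$'s final timestamp $ts_f$ is fixed, the last moment at which some group may still produce a proposal below $ts_f$, following the equivalent formulation borrowed from \cite{pacheco2023primcast}. A group proposes a timestamp for a message only when it delivers the corresponding $\msgBegin$ from generic broadcast, and the value it proposes is its current clock $K$ (see \reflinestwo{gmcast:begin:3}{gmcast:begin:6}). Since $K$ is monotonic, once a group has $K \geq ts_f$ it can never again assign a timestamp smaller than $ts_f$. Thus it suffices to bound how long a \emph{late} group---one whose clock is still below $ts_f$---takes to raise its clock to at least $ts_f$.

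First I would argue that a late group reacts within one message delay of $ts_f$ being determined. The value $ts_f$ is the maximum of the per-group proposals; a late group learns it after collecting the proposals exchanged at \reflines{gmcast:begin:7}{gmcast:begin:8}, which costs one message delay. At that point the test $K < ts_f$ at \refline{gmcast:decide:5} holds, so the group broadcasts $\msgAdvance(ts_f)$ within its group at \refline{gmcast:decide:7}. Next I would bound the cost of this broadcast: by \refequation{gbcast-conflict} an $\msgAdvance$ message conflicts with every other message, hence it follows the contention path of generic broadcast and is delivered after at most three message delays. When it is delivered, the handler at \reflines{gmcast:bump:1}{gmcast:bump:4} sets $K \gets ts_f$ (\refline{gmcast:bump:3}). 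Summing the single delay needed to learn $ts_f$ and the three delays of the $\msgAdvance$ broadcast gives the claimed bound of four, consistent with the failure-free latency $7+4=11$ reported in \reftab{comparison} for the commit latency of \reflem{failure-free:commit}.

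The delicate step is ruling out that a late group still assigns a timestamp below $ts_f$ after these four delays, that is, that a conflicting $\msgBegin$ slips in between the gathering of proposals and the delivery of the group's own $\msgAdvance$. This is where the conflict definition of \refequation{gbcast-conflict} does the work: every $\msgAdvance$ conflicts with every $\msgBegin$, so within a group generic broadcast totally orders them. Any $\msgBegin$ delivered after $\msgAdvance(ts_f)$ therefore observes $K \geq ts_f$ and cannot obtain a smaller proposal, whereas any $\msgBegin$ delivered before it is already accounted for among the lower-timestamp messages that must be known before $m$ is delivered (the second conjunct of the precondition at \refline{gmcast:deliver:2}). I expect the main obstacle to be making this last-moment argument airtight---precisely, showing that the messages still able to receive a timestamp below $ts_f$ are exactly those whose $\msgBegin$ precedes $\msgAdvance(ts_f)$ in the group's delivery order, so that they cannot form an unbounded chain and are all settled within the four message delays---together with exhibiting a run in which four delays are genuinely required, which establishes tightness.
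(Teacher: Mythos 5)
Your count of $1+3=4$ measures the wrong window, and this is a genuine gap rather than a presentational difference. What you bound is the time, starting from the instant $ts_f$ is fixed, until every destination group's clock has reached $ts_f$ (one delay for the proposal exchange, three for the $\msgAdvance(ts_f)$ broadcast). But those four delays are already contained in the commit latency of \reflem{failure-free:commit}: its count of $7 = 3+1+3$ consists of the $\msgBegin(m)$ broadcast, the very same proposal exchange, and the very same $\msgAdvance$ broadcast. Since the failure-free latency is the commit latency plus the clock-update latency (giving the $7+4=11$ reported in \reftab{comparison}), the clock-update latency must be a window that \emph{starts where the commit window ends}, i.e., after the clocks have been bumped past $ts_f$. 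Your window lies inside the commit window, so adding it to $7$ double-counts and says nothing about when $m$ can actually be delivered.

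The quantity the lemma is really about, and what the paper's proof counts, is what happens \emph{after} the clocks reach $ts_f$. At that point there can still exist a message $m'$ that will end up with a final timestamp below $ts_f$: its $\msgBegin(m')$ was delivered by generic broadcast at some group just before that group delivered $\msgAdvance(ts_f)$, so it holds a proposal below $ts_f$ there, and by the precondition at \refline{gmcast:deliver:2} it blocks the delivery of $m$. Such an $m'$ is not yet ``known'' in the sense of the definition: its $\msgBegin(m')$ may still be in flight to the other groups of $m'.d$, which takes up to three delays under contention, plus one more delay for its proposals to be exchanged --- four delays counted from the clock update. Your final paragraph identifies exactly these messages, but then asserts that they ``are all settled within the four message delays'' you counted; that assertion is false (they are settled within four delays after the clock update, a strictly later window), and establishing that bound is precisely the content of the paper's proof that your argument leaves out.
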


\begin{proof}
  Consider that some message $m$ has a final timestamp $t$.
  If a message $m'$ ends up with a lower timestamp than $t$, this message is delivered right before the clock is updated to (at least) $t$.
  In the worst case, $m'$ is generic broadcast at this point by another (consensus) group, say $g$.
  It takes three message delays to get delivered at $g$.
  Then, the timestamp proposals are computed for $m'$ and exchanged among the group partitioning the destination.
  Hence, four message delays were needed in total to compute the timestamp of $m'$ since the clock was updated.
\end{proof}

In the light of the last two lemmas, we may conclude the following about the performance of \refalg{gmcast} in failure-free runs.

\begin{corollary}
  \labcor{failure-free}
  In every run $r \in R$, $dl_r(m) \leq 11$ for any $m$.
\end{corollary}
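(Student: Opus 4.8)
The plan is to derive the bound directly from the two preceding lemmas, following the decomposition of failure-free latency introduced in \cite{gotsman2019white}: the failure-free latency equals the commit latency plus the clock-update latency. Since \reflem{failure-free:commit} establishes that the commit latency is $7$ and \reflem{failure-free:clock-update} establishes that the clock-update latency is $4$, the target bound $dl_r(m) \leq 11$ should fall out of the sum $7 + 4 = 11$. My task is therefore mostly to justify that these two quantities compose additively to bound the delivery latency of any message.

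First I would fix an arbitrary run $r \in R$ and an arbitrary message $m$ multicast in $r$. Invoking \reflem{failure-free:commit}, within $7$ message delays of $m$ being multicast the final timestamp $t$ of $m$ is known, and a $\msgDeliver(m, t)$ entry is placed in $Mem$ at every process of $m.d$. This discharges the existence part of the delivery precondition at \refline{gmcast:deliver:2}, leaving only the clock condition $t \leq K$ and the minimality condition over conflicting messages.

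Next I would account for the remaining obligations using \reflem{failure-free:clock-update}. Once the final timestamp is computed, the clock is raised to at least $t$ through the $\msgAdvance(t)$ mechanism in \reflines{gmcast:decide:5}{gmcast:bump:4}, and the clock-update latency of $4$ message delays bounds the delay after which no group can still assign a conflicting message a timestamp below $t$. After that point, every conflicting message with a smaller final timestamp has itself been delivered, so the second conjunct of \refline{gmcast:deliver:2} holds and $m$ is delivered.

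The main obstacle will be the bookkeeping of how the two phases compose: I must check that summing the two worst-case bounds does not tacitly assume both worst cases occur together, and that the two intervals do not overlap. Because the clock-update latency of \reflem{failure-free:clock-update} is measured from exactly the instant the final timestamp is computed --- the very instant at which the commit-latency interval of \reflem{failure-free:commit} ends --- the two intervals abut cleanly, and their sum $7 + 4 = 11$ is a sound upper bound on $dl_r(m)$ regardless of which individual worst cases materialize.
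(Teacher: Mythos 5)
Your proposal is correct and follows exactly the paper's route: the paper states this corollary as an immediate consequence of \reflem{failure-free:commit} and \reflem{failure-free:clock-update}, using the same decomposition (failure-free latency equals commit latency plus clock-update latency, per \cite{gotsman2019white}), giving $7 + 4 = 11$. Your additional care in checking that the two intervals abut at the instant the final timestamp is computed is a faithful elaboration of what the paper leaves implicit.
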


%%%%%%%%%%%%%%%%%%%%%%%%%%%%
%
% CONCLUSION
%
%%%%%%%%%%%%%%%%%%%%%%%%%%%%
\section{Closing Remarks}
\labsect{conclusion}

This section discusses our results and some possible extensions before closing.

\paragraph{About genericity}
\refalg{gmcast} is limited to read/write conflicts over a single data item key.
Nevertheless, it is possible to leverage commutativity in other situations.
For instance, the very same algorithm works with two additional classes of operations:
one in which operations conflict with nothing, and another where they conflict with everything.
More complex situations demand replicas to store additional metadata.
To track operations over different keys (or segments of keys), we can de-duplicate variables $K$ and $PreviousMsgs$, using a pair of variables per key.
Notice that \refalg{nofault} does not have such a limitation, and it works for any conflict relation.
We believe that this is due to an inherent trade-off between genericity and fault-tolerance.%
\footnote{
Actually, it is possible to replace variable $PreviousMsgs$ in \refalg{gmcast} with a singleton that simply stores the last operation (a read or a write).
}

Regarding generic group communication primitives, we note that they might not always be the best approach.
For instance, simpler algorithmic solutions can be more efficient in specific scenarios (e.g., RDMA networks~\cite{rdmaPaxos}), or when it could be faster to broadcast messages instead of multicasting them~\cite{schiper2009genuine}.
This comes from the fact that group communication primitives are sensitive to the considered application usage.

\paragraph{Future work}
% real-time
The definition of generic multicast can be extended to capture real-time.
In this case, an ordering relation binds two messages $m$ and $m'$ in case $m$ is delivered at a process before $m'$ is multicast (even if $m$ and $m'$ do not have a common process).
Such a variation makes sense when the communication primitive is used in the context of data replication \cite{BezerraPR14}.
Tempo~\cite{enes2021efficient} ensures such a guarantee.
Without it, a message can get delivered earlier in the algorithm after 3 message delays.
We could extend \refalg{gmcast} to satisfy this additional property.
In \refalg{gmcast}, a process must know that the clock of its consensus group is after its final timestamp to deliver it.
It suffices to change the precondition into the clocks of all the consensus groups in the destination.

% conflict classes
In \reftab{comparison}, the space usage of Tempo is higher than with our algorithms.
This comes from the fact that this algorithm discriminates several classes of conflicts (typically using a key per conflict class).
As outlined earlier, one may apply such an idea to \refalg{gmcast} at the cost of storing more metadata at each replica.

% faster delivery
We conjecture that it is possible to cut one message delay in \refalg{gmcast}.
The solution would be similar in spirit to PrimCast~\cite{pacheco2023primcast}:
Each consensus group listens to the decisions taken by the other groups at the destination.
Upon receiving a quorum of commit acknowledgments (2B messages in Generalized Paxos~\cite{lamport2005generalized}), a process knows the clock of a remote group right away.
This can be implemented with the notion of witness as found in state-machine replication algorithms (\textit{e.g.}, \cite{hunt2010zookeeper,curp}).
This optimization skips the need to exchange $\msgPropose$ messages in \refalg{gmcast}.

\paragraph{Conclusion}
This work defines generic multicast in crash-prone distributed systems.
It presents two matching solutions that are variations of the timestamping approach invented by Skeen.
The first solution works in a failure-free environment.
It is extended into a failure-prone algorithm using the standard partitioning assumption over destination groups.
By employing techniques from other well-established works in the literature, the resulting algorithm (and recovery procedure) are relatively simple and understandable.
The algorithm uses a generic broadcast in each group to compute timestamp proposals and deliver messages in a consistent order.
When the run is conflict-free, that is, no two messages conflict, the algorithm delivers each message after three message delays.

%%
%% The acknowledgments section is defined using the "acks" environment
%% (and NOT an unnumbered section). This ensures the proper
%% identification of the section in the article metadata, and the
%% consistent spelling of the heading.
%\begin{acks}
%\end{acks}

%%
%% Print the bibliography
%%
\newpage
\printbibliography

@article{defago2004total,
  title={Total order broadcast and multicast algorithms: Taxonomy and survey},
  author={D{\'e}fago, Xavier and Schiper, Andr{\'e} and Urb{\'a}n, P{\'e}ter},
  journal={ACM Computing Surveys (CSUR)},
  volume={36},
  number={4},
  pages={372--421},
  year={2004},
  publisher={ACM New York, NY, USA}
}

@inproceedings{ahmed2016convoy,
  title={The convoy effect in atomic multicast},
  author={Ahmed-Nacer, Tarek and Sutra, Pierre and Conan, Denis},
  booktitle={2016 IEEE 35th Symposium on Reliable Distributed Systems Workshops (SRDSW)},
  pages={67--72},
  abstract={A generic multicast primitive is the best of both worlds, combining the flexibility to address sub-sets of processes and creating a partial order of messages. Such a primitive also reduces the convoy effect, where it can deliver messages out-of-order while guaranteeing total ordering where necessary. However, such primitive is under-explored for application developments. In the current work, we advocate for generic multicast, presenting three algorithms based on established atomic multicast algorithms in the literature.},
  year={2016},
  organization={IEEE}
}

@inproceedings{schiper2009genuine,
  title={Genuine versus non-genuine atomic multicast protocols for wide area networks: An empirical study},
  author={Schiper, Nicolas and Sutra, Pierre and Pedone, Fernando},
  booktitle={2009 28th IEEE International Symposium on Reliable Distributed Systems (SRDS)},
  pages={166--175},
  year={2009},
  organization={IEEE}
}

@inproceedings{pedone1999generic,
  title={Generic broadcast},
  author={Pedone, Fernando and Schiper, Andr{\'e}},
  booktitle={International Symposium on Distributed Computing (PODC)},
  pages={94--106},
  year={1999},
  organization={Springer},
}

@article{guerraoui2001genuine,
  title={Genuine atomic multicast in asynchronous distributed systems},
  author={Guerraoui, Rachid and Schiper, Andr{\'e}},
  journal={Theoretical Computer Science},
  volume={254},
  number={1-2},
  pages={297--316},
  year={2001},
  publisher={Elsevier}
}

@article{birman1987reliable,
  title={Reliable communication in the presence of failures},
  author={Birman, Kenneth P and Joseph, Thomas A},
  journal={ACM Transactions on Computer Systems (TOCS)},
  volume={5},
  number={1},
  pages={47--76},
  year={1987},
  publisher={ACM}
}

@inproceedings{fritzke1998fault,
  title={Fault-tolerant total order multicast to asynchronous groups},
  author={Fritzke, Udo and Ingels, Philippe and Most{\'e}faoui, Achour and Raynal, Michel},
  booktitle={Proceedings Seventeenth IEEE Symposium on Reliable Distributed Systems (SRDS)},
  pages={228--234},
  year={1998},
  organization={IEEE}
}

@techreport{schiper2007optimal,
  title={Optimal atomic broadcast and multicast algorithms for wide area networks},
  author={Schiper, Nicolas and Pedone, Fernando},
  year={2007},
  institution={Universit{\`a} della Svizzera italiana}
}

@article{chandra1996unreliable,
  title={Unreliable failure detectors for reliable distributed systems},
  author={Chandra, Tushar Deepak and Toueg, Sam},
  journal={Journal of the ACM (JACM)},
  volume={43},
  number={2},
  pages={225--267},
  year={1996},
  publisher={ACM}
}

@article{pedone2002handling,
  title={Handling message semantics with generic broadcast protocols},
  author={Pedone, Fernando and Schiper, Andr{\'e}},
  journal={Distributed Computing},
  volume={15},
  number={2},
  pages={97--107},
  year={2002},
  publisher={Springer}
}

@TechReport{lamport2005generalized,
Keywords = 	 {rep,syn},
  author = 	 {Leslie Lamport},
  title = 	 {Generalized Consensus and {P}axos},
  institution =  {Microsoft Research},
  year = 	 2005,
  number = 	 {MSR-TR-2005-33},
  month = 	 mar
}

@inproceedings{ongaro2014search,
  title={In search of an understandable consensus algorithm},
  author={Ongaro, Diego and Ousterhout, John},
  booktitle={2014 {USENIX} Annual Technical Conference (ATC)},
  pages={305--319},
  year={2014},
}

@inproceedings{sutra2022weakest,
  author       = {Pierre Sutra},
  editor       = {Christian Scheideler},
  title        = {The Weakest Failure Detector for Genuine Atomic Multicast},
  booktitle    = {36th International Symposium on Distributed Computing, {DISC} 2022,
                  October 25-27, 2022, Augusta, Georgia, {USA}},
  series       = {LIPIcs},
  volume       = {246},
  pages        = {35:1--35:19},
  publisher    = {Schloss Dagstuhl - Leibniz-Zentrum f{\"{u}}r Informatik},
  year         = {2022},
  url          = {https://doi.org/10.4230/LIPIcs.DISC.2022.35},
  doi          = {10.4230/LIPICS.DISC.2022.35},
  timestamp    = {Wed, 07 Dec 2022 23:11:11 +0100},
  biburl       = {https://dblp.org/rec/conf/wdag/Sutra22.bib},
  bibsource    = {dblp computer science bibliography, https://dblp.org}
}

@inproceedings{chandra2007paxos,
  title={Paxos made live: an engineering perspective},
  author={Chandra, Tushar D and Griesemer, Robert and Redstone, Joshua},
  booktitle={Proceedings of the twenty-sixth annual ACM symposium on Principles of distributed computing (PODC)},
  pages={398--407},
  year={2007},
}

@inproceedings{hunt2010zookeeper,
  title={ZooKeeper: Wait-free coordination for internet-scale systems},
  author={Hunt, Patrick and Konar, Mahadev and Junqueira, Flavio P and Reed, Benjamin},
  booktitle={2010 USENIX Annual Technical Conference (ATC)},
  year={2010}
}

@article{corbett2013spanner,
  title={Spanner: Google’s globally distributed database},
  author={Corbett, James C and Dean, Jeffrey and Epstein, Michael and Fikes, Andrew and Frost, Christopher and Furman, Jeffrey John and Ghemawat, Sanjay and Gubarev, Andrey and Heiser, Christopher and Hochschild, Peter and others},
  journal={ACM Transactions on Computer Systems (TOCS)},
  volume={31},
  number={3},
  pages={1--22},
  year={2013},
  publisher={ACM New York, NY, USA}
}

@inproceedings{pacheco2023primcast,
  title={PrimCast: A Latency-Efficient Atomic Multicast},
  author={Pacheco, Leandro and Coelho, Paulo and Pedone, Fernando},
  booktitle={Proceedings of the 24th International Middleware Conference (Middleware)},
  pages={124--136},
  year={2023}
}

@inproceedings{coelho2017fast,
  title={Fast atomic multicast},
  author={Coelho, Paulo R and Schiper, Nicolas and Pedone, Fernando},
  booktitle={2017 47th Annual IEEE/IFIP International Conference on Dependable Systems and Networks (DSN)},
  pages={37--48},
  year={2017},
  organization={IEEE}
}

@inproceedings{gotsman2019white,
  title={White-box atomic multicast},
  author={Gotsman, Alexey and Lefort, Anatole and Chockler, Gregory},
  booktitle={2019 49th Annual IEEE/IFIP International Conference on Dependable Systems and Networks (DSN)},
  pages={176--187},
  year={2019},
  organization={IEEE}
}

@inproceedings{enes2021efficient,
  title={Efficient replication via timestamp stability},
  author={Enes, Vitor and Baquero, Carlos and Gotsman, Alexey and Sutra, Pierre},
  booktitle={Proceedings of the Sixteenth European Conference on Computer Systems (EuroSys)},
  pages={178--193},
  year={2021}
}

@inproceedings{gray1978notes,
author = {Gray, Jim},
title = {Notes on Data Base Operating Systems},
year = {1978},
isbn = {3540087559},
publisher = {Springer-Verlag},
address = {Berlin, Heidelberg},
booktitle = {Operating Systems, An Advanced Course},
pages = {393–481},
numpages = {89}
}

@book{lampson1979crash,
  title={Crash recovery in a distributed data storage system},
  author={Lampson, Butler W and Sturgis, Howard E},
  year={1979},
  publisher={Xerox Palo Alto Research Center Palo Alto, California}
}

@inproceedings{ryabinin2024swiftpaxos,
  title={SwiftPaxos: Fast Geo-Replicated State Machines},
  author={Ryabinin, Fedor and Gotsman, Alexey and Sutra, Pierre},
  booktitle={21st USENIX Symposium on Networked Systems Design and Implementation (NSDI)},
  year={2024}
}

@inproceedings{sutra2011fast,
  title={Fast genuine generalized consensus},
  author={Sutra, Pierre and Shapiro, Marc},
  booktitle={2011 IEEE 30th International Symposium on Reliable Distributed Systems (SRDS)},
  pages={255--264},
  year={2011},
  organization={IEEE}
}

@inproceedings{BezerraPR14,
  author       = {Carlos Eduardo Benevides Bezerra and
                  Fernando Pedone and
                  Robbert van Renesse},
  title        = {Scalable State-Machine Replication},
  booktitle    = {44th Annual {IEEE/IFIP} International Conference on Dependable Systems and Networks, {DSN}},
  pages        = {331--342},
  publisher    = {{IEEE} Computer Society},
  year         = {2014},
  url          = {https://doi.org/10.1109/DSN.2014.41},
  doi          = {10.1109/DSN.2014.41},
  timestamp    = {Fri, 24 Mar 2023 00:01:45 +0100},
  biburl       = {https://dblp.org/rec/conf/dsn/BezerraPR14.bib},
  bibsource    = {dblp computer science bibliography, https://dblp.org}
}

@inproceedings{GuerraouiS97,
  author       = {Rachid Guerraoui and
                  Andr{\'{e}} Schiper},
  editor       = {Marios Mavronicolas and
                  Philippas Tsigas},
  title        = {Genuine Atomic Multicast},
  booktitle    = {Distributed Algorithms, 11th International Workshop, {WDAG} '97, Saarbr{\"{u}}cken,
                  Germany, September 24-26, 1997, Proceedings},
  series       = {Lecture Notes in Computer Science},
  volume       = {1320},
  pages        = {141--154},
  publisher    = {Springer},
  year         = {1997},
  url          = {https://doi.org/10.1007/BFb0030681},
  doi          = {10.1007/BFB0030681},
  timestamp    = {Wed, 07 Dec 2022 23:11:11 +0100},
  biburl       = {https://dblp.org/rec/conf/wdag/GuerraouiS97.bib},
  bibsource    = {dblp computer science bibliography, https://dblp.org}
}

@misc{long-version,
title = {Generic Multicast (Extended Version)},
author = {Jos{\'e} Augusto Bolina and Pierre Sutra and Douglas Antunes Rocha and Lasaro Camargos},
year = 2024,
month = jul,
url= "https://arxiv.org/abs/2410.01901"
}

@misc{tla-specification,
    author = {Jos{\'e} Augusto Bolina},
    year = {2022},
    url= "https://github.com/jabolina/mcast-tlaplus",
}

@article{Lamport06a,
  author       = {Leslie Lamport},
  title        = {Lower bounds for asynchronous consensus},
  journal      = {Distributed Comput.},
  volume       = {19},
  number       = {2},
  pages        = {104--125},
  year         = {2006},
  url          = {https://doi.org/10.1007/s00446-006-0155-x},
  doi          = {10.1007/S00446-006-0155-X},
  timestamp    = {Fri, 13 Mar 2020 14:37:27 +0100},
  biburl       = {https://dblp.org/rec/journals/dc/Lamport06a.bib},
  bibsource    = {dblp computer science bibliography, https://dblp.org}
}

@inproceedings{delporte00,
  author    = {Delporte-Gallet, Carole and Fauconnier, Hugues},
  editor    = {Butelle, Franck},
  publisher = {Suger, rue Catulienne Saint-Denis France},
  booktitle = {Procedings of the 4th International Conference on Principles of Distributed Systems, {OPODIS} 2000, Paris, France, December 20-22, 2000},
  date      = {2000},
  pages     = {107--122},
  series    = {Studia Informatica Universalis},
  title     = {Fault-Tolerant Genuine Atomic Multicast to Multiple Groups},
}

@phdthesis{SchiperPhD,
  title={On Multicast Primitives in Large Networks and Partial Replication Protocols},
  author={Nicolas Schiper},
  year={2009}
}

@inproceedings{CamaioniGMRVV24,
  author       = {Martina Camaioni and
                  Rachid Guerraoui and
                  Matteo Monti and
                  Pierre{-}Louis Roman and
                  Manuel Vidigueira and
                  Gauthier Voron},
  editor       = {Ada Gavrilovska and
                  Douglas B. Terry},
  title        = {Chop Chop: Byzantine Atomic Broadcast to the Network Limit},
  booktitle    = {18th {USENIX} Symposium on Operating Systems Design and Implementation,
                  {OSDI} 2024, Santa Clara, CA, USA, July 10-12, 2024},
  pages        = {269--287},
  publisher    = {{USENIX} Association},
  year         = {2024},
  url          = {https://www.usenix.org/conference/osdi24/presentation/camaioni},
  timestamp    = {Tue, 16 Jul 2024 22:11:07 +0200},
  biburl       = {https://dblp.org/rec/conf/osdi/CamaioniGMRVV24.bib},
  bibsource    = {dblp computer science bibliography, https://dblp.org}
}

@inproceedings{granola,
  author    = {Cowling, James A. and Liskov, Barbara},
  editor    = {Heiser, Gernot and Hsieh, Wilson C.},
  publisher = {{USENIX} Association},
  booktitle = {2012 {USENIX} Annual Technical Conference, Boston, MA, USA, June 13-15, 2012},
  date      = {2012},
  pages     = {223--235},
  title     = {Granola: Low-Overhead Distributed Transaction Coordination},
}

@inproceedings{epaxos,
  author    = {Moraru, Iulian and Andersen, David G. and Kaminsky, Michael},
  booktitle = {Symposium on Operating Systems Principles ({SOSP})},
  date      = {2013},
  title     = {{There Is More Consensus in Egalitarian Parliaments}},
}

@inproceedings {curp,
  author    = {Seo Jin Park and John Ousterhout},
  title     = {{Exploiting Commutativity For Practical Fast Replication}},
  booktitle = {Symposium on Networked Systems Design and Implementation ({NSDI})},
  year      = {2019},
}

@inproceedings{rdmaPaxos,
author = {Wang, Cheng and Jiang, Jianyu and Chen, Xusheng and Yi, Ning and Cui, Heming},
title = {APUS: fast and scalable paxos on RDMA},
year = {2017},
isbn = {9781450350280},
publisher = {Association for Computing Machinery},
address = {New York, NY, USA},
url = {https://doi.org/10.1145/3127479.3128609},
doi = {10.1145/3127479.3128609},
booktitle = {Proceedings of the 2017 Symposium on Cloud Computing},
pages = {94–107},
numpages = {14},
keywords = {state machine replication, software reliability, remote direct memory access, fault tolerance},
location = {Santa Clara, California},
series = {SoCC}
}

@inproceedings{BenzMPG14a,
author = {Benz, Samuel and Marandi, Parisa Jalili and Pedone, Fernando and Garbinato, Beno\^it},
title = {Building global and scalable systems with atomic multicast},
year = {2014},
isbn = {9781450327855},
publisher = {Association for Computing Machinery},
address = {New York, NY, USA},
url = {https://doi.org/10.1145/2663165.2663323},
doi = {10.1145/2663165.2663323},
booktitle = {Proceedings of the 15th International Middleware Conference},
pages = {169–180},
numpages = {12},
location = {Bordeaux, France},
series = {Middleware}
}
%%
%% If your work has an appendix, this is the place to put it.

\iflong
\clearpage
\appendix
\section{Proofs}
\subsection{Safety} \labappendix{proofs-safety}

We start the proof of \refalg{gmcast} with some invariants:

\begin{enumerate}[label=\textsf{Base \arabic*}]
\item At a process $p$, relation $\delOrderOf{p}$ is irreflexive. \labinv{deliver-once}
\item The logical clock at a process never decreases. \labinv{never-decrease}
\item If a process sends $\msgPropose(m, t)$ and $\msgPropose(m', t')$ with $m \sim m'$, then $t \neq t'$. \labinv{clock-ticks}
\item Consider a consensus group $g$.
  Processes in $g$ execute the two blocks in lines~\ref{line:alg:gmcast:begin:3}-\ref{line:alg:gmcast:begin:4} and lines~\ref{line:alg:gmcast:decide:6}-\ref{line:alg:gmcast:decide:7} in the same order. \labinv{block-order}
\item Consider a consensus group $g$.
  If $p \in g$ sends a $\msgPropose(m, t)$ and $q \in g$ also sends $\msgPropose(m, t')$, then $t' = t$. \labinv{group-same-proposal}
\end{enumerate}

\begin{enumerate}[label=\textsf{SAF\alph*}]
\item Processes in the destination group of a message $m$ agree on the final timestamp of $m$. \labinv{safa}
\item A process delivers a message only if it has already delivered all the conflicting messages with a lower timestamp. \labinv{safb}
\end{enumerate}

\begin{enumerate}[label=\textsf{SAF}]
\item
  For any two conflicting messages $m$ and $m'$ delivered in a run with respectively timestamps $t$ and $t'$, if $m \delOrderOf{p} m'$ then $(m, t) < (m', t')$. \labinv{saf}
\end{enumerate}

\begin{proof}[Proof \refinv{deliver-once}]
  When $p$ delivers some message $m$, $\deliverMsg(m, t)$ must be stored in $Mem$ (\refline{gmcast:deliver:2}).
  Moreover, when this takes place, $\deliverMsg(m, t)$ is removed from $Mem$ (\refline{gmcast:deliver:3}).
  $\deliverMsg(m, t)$ is added at \refline{gmcast:decide:4}.
  This requires $\msgBegin(m)$ in $Mem$.
  When $p$ executes \refline{gmcast:decide:4}, $\msgBegin(m)$ is removed from $Mem$.  
  Such a message is added when the block in \reflines{gmcast:begin:1}{gmcast:begin:8} executes.
  This block is triggered when $\msgBegin(m)$ is delivered by Generic Broadcast in the (consensus) group of process $p$.
  Because Generic Broadcast delivers each message at most once, this happens at most once.
  From what precedes, the delivery of $m$ occurs at most once which implies that $\delOrderOf{p}$ is irreflexive at $p$.
\end{proof}

\begin{proof}[Proof \refinv{never-decrease}]
    Follows from the pseudo-code of \refalg{gmcast}.
    The clock (variable $K$) is either incremented by one (\refline{gmcast:begin:3}), or bumped to a higher value (\reflines{gmcast:bump:2}{gmcast:bump:3}).
\end{proof}

\begin{proof}[Proof \refinv{clock-ticks}]
  Consider that process $p$ sends messages $\msgPropose(m, t)$ and $\msgPropose(m', t')$ during a run, with $m \sim m'$.
  Such messages are disseminated at \refline{gmcast:begin:8}, after that $\msgBegin(m)$ and $\msgBegin(m')$ were delivered at \refline{gmcast:begin:1} by Generic Broadcast. 
  A message is delivered by Generic Broadcast at most once.
  Without lack of generality, consider that $p$ executes \refline{gmcast:begin:1} for $m$ before $m'$.
  Let $\tau$ and $\tau' > \tau$ be the respective points in time at which this takes place.
  
  Recall that each block executes in full atomically, and thus that they are not interleaved.
  At time $\tau$, process $p$ sends a $\proposeMsg(m, t)$ message at \refline{gmcast:begin:8}.
  Note $K_{\tau}$ the value of the clock variable $K$ on process $p$ at time $\tau$.
  According to the block in \reflines{gmcast:begin:1}{gmcast:begin:8}, $K_{\tau}=t$.
  
  By \refinv{never-decrease}, variable $K$ never decreases.  
  Note $K_{\tau'-1}$ the value of the clock variable on process $p$ right before it executes the delivery of $\msgBegin(m')$ at \refline{gmcast:begin:1}.
  There are two cases to consider.
  (case $K_{\tau'-1} > K_{\tau}$.)
  Using the same reasoning as above, we have $t'  \geq K_{\tau'-1}$.
  (case $K_{\tau'-1} = K_{\tau}$.)
  Because $m' \sim m$, process $p$ must execute \reflines{gmcast:begin:3}{gmcast:begin:4} before sending $\msgPropose(m', t')$.
  It follows that $t' = K_{\tau'-1} + 1$, as required.
\end{proof}

\begin{proof}[Proof \refinv{block-order}]
  The two blocks in lines~\ref{line:alg:gmcast:begin:3}-\ref{line:alg:gmcast:begin:4} and lines~\ref{line:alg:gmcast:decide:6}-\ref{line:alg:gmcast:decide:7} happen when a $\msgBegin$ (or $\msgAdvance$) message is delivered in $g$.
  For some $\msgBegin$ or $\msgAdvance$ message $M$, we note $B_M$ such a block.
  Essentially, the invariant tells us that processes in $g$ executes these blocks in the same sequence.

  The proof is by induction.
  Consider that the property holds until time $\tau-1$.  
  At time $\tau$, assume that a process $p$ executes a block $B_M$.
  Consider the process $q$ that executed the most blocks at time $\tau-1$.
  If $q$ also executed block $B_M$, or in case $q$ is in late wrt. $p$, we are done.
  Otherwise, assume that it executed a block $B_{M'}$, with $M \neq M'$.
  If $M$ or $M'$ is an $\msgAdvance$ message, it is easy to obtain a contradiction according to \refequation{gbcast-conflict} and the ordering property of Generic Broadcast.
  Hence, the sole case to consider is when $M$ and $M'$ are both $\msgBegin$ messages.
  Let $m$ and $m'$ be the two (application) messages they are conveying.
  In case $m$ or $m'$ is a write, because of our additional assumption in \refsect{gmcast:assumptions} and \refequation{rw}, these two messages conflict.
  It follows that $p$ and $q$ received them in different orders despite that they are conflicting according to \refequation{gbcast-conflict};
  a contradiction.
  The only situation that remains to consider is the one in which $m$ and $m'$ are two reads.
  Let $m_0$ and ${m'}_0$ be the two messages in $PreviousMsgs$ that triggered the block in lines~\ref{line:alg:gmcast:begin:3}-\ref{line:alg:gmcast:begin:4}, for respectively $M$ and $M'$ at processes $p$ and $q$.
  Observe that $m_0$ and ${m'}_0$ conflict (they are two writes on the same key).
  Thus, the corresponding $\msgBegin$ messages must have been received in the same order via Generic Broadcast.
  By our induction hypothesis, and a short induction, $p$'s or $q$'s state cannot not sound;
  contradiction.
\end{proof}

\begin{proof}[Proof \refinv{group-same-proposal}]
  This is a straightforward consequence of \refinv{block-order}.
\end{proof}

\begin{proof}[Proof \refinv{safa}]
  Consider a message $m$.
  According to the assumptions in \refsect{gmcast:assumptions}, $m.d$ is partitioned uniquely into a set $\{g_1,\ldots,g_m\}$ of consensus groups.
  A process decides the final timestamp of a message at \refline{gmcast:decide:4}.
  This happens when for each group $g_i$ in the partition, $p$ has a timestamp proposal from $g_i$ (\refline{gmcast:decide:2}).
  Hence, because of \refinv{group-same-proposal}, processes must agree on the same final timestamp.
\end{proof}

\begin{proof}[Proof \refinv{safb}]
  Assume that a process $p$ delivers a message $m$ with a timestamps $t$.
  Let $m'$ be some message conflicting with $m$, also addressed to $p$, whose final timestamp is $t'$ (by \refinv{safa} processes agree on this), with $(m',t') < (m,t)$.
  Below, we establish that $m'$ is already delivered at the time $p$ delivers $m$.

  Consider the point in time $\tau$ where $p$ delivers $m$.
  The block in \reflines{gmcast:deliver:1}{gmcast:deliver:3} is responsible for the delivery.
  In particular, this block has a guard to ensure that things happen in the right order (\refline{gmcast:deliver:2}).
  Namely, $m$ is delivered with timestamp $t$ only if for any message $m'$ in $Mem$ either $m$ does not conflict with $m'$, or $m$ has a lower timestamp than $m'$.  

  If $m'$ is already delivered at time $\tau$, we are done.  
  Otherwise, assume that $m'$ is delivered later, or not delivered at all by $p$.
  Below, we prove that a contradiction is reached.

  Message $m'$ conflicts with $m$ and $(m',t') < (m,t)$.
  Thus at time $\tau$, either $m'$ is in $Mem$ at $p$ with a higher timestamp $t''>t$, or $m'$ is not in $Mem$.
  In the former case, $Mem$ contains a $\msgPropose(m',t'')$.
  In light of the pseudo-code of \refalg{gmcast}, necessarily $t'\geq t'' >t$;
  contradiction.
  Alternatively, consider the case where $m'$ is absent from $Mem$ at $\tau$.
  Let $t''$ be the timestamp assigned at \refline{gmcast:begin:6} by process $p$ to $m'$.
  Necessarily, $t'' \leq t' < t$.
  This event must happen before time $\tau$ due to the precondition in \refline{gmcast:deliver:1}, requiring $K \geq t$.
  It follows that there exists a $\msgPropose(m,t'')$ in $Mem$ at an earlier time than $\tau$.
  Because $m'$ is not delivered yet, this message is still in $Mem$ or by \refinv{group-same-proposal}, there is a $\msgDeliver(m',t')$.
  In both cases, we reach the desired contradiction.
\end{proof}

\begin{proof}[Proof \refinv{saf}]
  The proof follows from \refinv{safa} and \refinv{safb}.
  From \refinv{safa}, the processes in the destination agree on every message's final timestamp.
  \refinv{safb} guarantees delivery with a deterministic agreed order, utilizing the timestamp or the message's identifier to break ties.
  
  With more details, the proof is by contradiction and as follows:
  Assume $m$ and $m'$ are delivered with respectively timestamp $t$ and $t'$ in a run.
  Consider that $(m',t') < (m,t)$ and that for some process $p$, $m \delOrderOf{p} m'$.
  Let $t''$ be the timestamp used by $p$ to deliver $m$.
  Applying \refinv{safa}, $t''=t$.
  By \refinv{safb}, because $(m',t') < (m,t)$ and $p \in m'.d$, $p$ must deliver $(m',t')$ before $m$.
  Hence, $m' \delOrderOf{p} m$ and we have a contradiction by \refinv{deliver-once}.
\end{proof}

\begin{theorem}
    \refalg{gmcast} guarantees the Ordering property of generic multicast.
\end{theorem}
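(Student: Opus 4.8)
The plan is to embed the global delivery relation $\delOrder$ into a strict total order, for which acyclicity is then immediate. The key tool is the already-established safety invariant (\SAF) of \refinv{saf}, which states that whenever $m \delOrderOf{p} m'$ for conflicting $m$ and $m'$ with final timestamps $t$ and $t'$, then $(m,t) < (m',t')$. Recall that $(m,t) < (m',t')$ abbreviates $t < t' \lor (t = t' \land m.id < m'.id)$, which is a \emph{strict total order} on distinct messages, since identifiers are unique.

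First I would record that $\delOrder = \bigcup_{p \in \Pi} \delOrderOf{p}$, and that each relation $\delOrderOf{p}$ is, by definition, only defined between conflicting messages. Hence every edge of $\delOrder$ connects a pair of conflicting messages. Moreover, by (\SAFa) of \refinv{safa}, all processes in the destination group of a delivered message agree on its final timestamp; so each delivered message $m$ has a well-defined timestamp $t_m$, and the map $m \mapsto (m, t_m)$ is well defined on the set of delivered messages.

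Next I would argue by contradiction. Suppose $\delOrder$ contains a cycle $m_1 \delOrder m_2 \delOrder \cdots \delOrder m_k \delOrder m_1$. Each edge $m_i \delOrder m_{i+1}$ comes from some $\delOrderOf{p_i}$ and thus connects conflicting messages, so (\SAF) yields $(m_i, t_{m_i}) < (m_{i+1}, t_{m_{i+1}})$. Chaining these inequalities around the cycle and using transitivity of $<$ gives $(m_1, t_{m_1}) < (m_1, t_{m_1})$, contradicting the irreflexivity of the strict order $<$. Therefore $\delOrder$ admits no cycle and is acyclic, which is exactly the Ordering property.

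The main obstacle is not in this final argument, which is essentially a one-line reduction: all the genuine difficulty has already been discharged in proving (\SAF), together with its sub-invariants (\SAFa) on timestamp agreement within consensus groups and (\SAFb) on delivering conflicting messages in timestamp order. The only points demanding a little care here are confirming that $<$ is genuinely irreflexive and antisymmetric --- which relies on the uniqueness of message identifiers to break ties --- and checking that every $\delOrder$-edge really is an edge between conflicting messages, so that (\SAF) is applicable to it.
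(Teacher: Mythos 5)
Your proof is correct and takes essentially the same route as the paper's: both argue by contradiction, apply invariant (\SAF) to every edge of a hypothetical cycle in $\delOrder$ (noting each edge links conflicting messages), and chain the strict order $<$ on message--timestamp pairs to obtain $(m_0,t_0) < (m_0,t_0)$, a contradiction. The only cosmetic difference is that the paper additionally cites its irreflexivity invariant to rule out trivial one-message cycles, whereas you discharge that point via the irreflexivity of $<$ and the well-definedness of timestamps from (\SAFa), which is equally valid.
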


\begin{proof}
  For the sake of contradiction, assume that \refalg{gmcast} violates Ordering.
  By definition, there exists a cycle in the delivery of messages across the system.
  By \refinv{deliver-once}, this cycle contains at least two messages.
  In other words, for some $k \geq 1$, there exist messages $m_0, \ldots, m_{k}$ and processes $q_0, \ldots, q_{k}$ such that $m_0 \delOrderOf{q_0} m_1 \delOrderOf{q_1} \ldots \delOrderOf{q_{k-1}} m_k \delOrderOf{q_k} m_0$.
  For any $i \in [0,k]$, let $t_i$ be the timestamp of $m_i$ when the message is delivered at process $p_i$.
  Applying \refinv{saf}, $(m_i,t_i) < (m_{i+1~[k]},t_{i+1~[k]})$.
  Hence, $(m_0,t_0) < (m_0,t_0)$;
  contradiction.
\end{proof}

\subsection{Liveness}\labappendix{proofs-live}

We consider the following invariants:

\begin{enumerate}[label=\textsf{LIV\alph*}]
    \item If a correct process $\gamcast(m)$ or a process\\ $\gbdeliver{}(\msgBegin(m))$, eventually all correct processes in $m.d$ insert a $\msgDeliver(m, \any)$ in $Mem$. \labinv{create-deliver}

    \item If a correct process $p$ adds $\msgDeliver(m, t)$ in the $Mem$ set, eventually, $p$'s clock is equal to or higher than $t$. \labinv{clock-synchronizes}

    \item If a correct process $p$ includes a $\texttt{Deliver}(m, \_)$ in $Mem$, then $p$ eventually $\gdeliver(m)$. \labinv{eventually-deliver}
\end{enumerate}

\begin{proof}[Proof \refinv{create-deliver}]
    The critical lines are the for-loop in \reflinestwo{gmcast:amcast:3}{gmcast:amcast:4}.
    This procedure might have two outcomes.    
    Either every group in $m$'s destination delivers a $\msgBegin(m)$ message, or not.

    For starters, consider the former case.
    At this stage, incorrect process crashes does not affect liveness as we assume each group has at least one correct process.
    In detail, since each group contains one correct process, they share proposals at \reflinestwo{gmcast:begin:7}{gmcast:begin:8}.    
    These processes being correct and the channel reliable, the delivery of $\msgPropose(m, \any)$ messages from each group in $m.d$ eventually takes place at the correct processes in $m.d$.
    Thus, each such process calculates the final timestamp at \refline{gmcast:decide:3}.
    Then, it adds a $\msgDeliver(m, \any)$ message to $Mem$ at \refline{gmcast:decide:4}, which concludes the proof.

    Alternatively, consider the second case, that is some group does not deliver a $\msgBegin(m)$ message.
    We observe that this case cannot happen if the sender is correct.
    Hence, from the assumptions in \refinv{create-deliver}, a process $q$ executes $\gbdeliver{}(\msgBegin(m))$.
    Let $g \subseteq m.d$ be its group.
    Group $g$ contains at least one correct process;
    name it $p$.
    From the Termination property of generic broadcast, $p$ delivers $\msgBegin(m)$ because $q$ did.
    After delivering $\msgBegin(m)$ at \refline{gmcast:begin:1}, $p$ submits its proposal to every other processes in $m.d$ then adds $\msgPropose(m, \any)$ to $Mem$ at \refline{gmcast:begin:6}.
    Since the sender process crashes, process $p$'s failure detector ($\mathcal{D}$) eventually suspects $m.src$, triggering the recovery procedure (\refalg{recovery}).
    Therefore, as $p$ is correct, it successfully broadcasts a $\msgBegin(m)$ message to every other group in $m.d$ (which has not proposed a timestamp to $m$).
    Then, we may close the proof using the first case above.
\end{proof}

\begin{proof}[Proof \refinv{clock-synchronizes}]
  \refinv{create-deliver} guarantees that each correct processes in $m.d$ eventually inserts a $\msgDeliver(m,t)$ message in $Mem$.
    After deciding that $t$ is the final timestamp of $m$, a process verifies if the clock needs synchronization at \refline{gmcast:decide:5}.
    There are two outcomes for this, namely the timestamp is higher or not.
    In the former case, the proof is over.
    In the later, $p$'s local group must synchronize its clock.
    To this end, $p$ broadcasts an $\msgAdvance(t)$ message to the group.
    As $p$ is correct, $\msgAdvance(t)$ is eventually delivered at \refline{gmcast:bump:1}.
    Then, the clock is bumped (if needed) to a higher value than $t$ at \refline{gmcast:bump:3}.
\end{proof}

\begin{proof}[Proof \refinv{eventually-deliver}]
    First of all, let us note $\mathcal{T}$ the global (discret) time of the distributed system.
    We consider the following potential function $\Phi$ in $\mathcal{T} \times \Pi \mapsto 2^{\mathcal{M} \times \mathbb{N}}$:
    given a time $\tau \in \mathcal{T}$ and some process $p \in \Pi$, $\Phi(\tau,p)$ returns all the pairs $(m,t)$ such that message $m$ is stored in $Mem$ at $p$ with timestamp $t$ (i.e., $Mem$ contains a $\msgDeliver(m,t)$ or $\msgPropose(m,t)$ message.)
    Then, for such a set of pairs, $\phi(\tau,p,ts_f)$ are all the messages with a timestamp smaller than $ts_f$.

    Assume a process $p$ inserts $\msgDeliver(m, ts_f)$ in $Mem$.
    Applying \refinv{clock-synchronizes}, the clock of $p$ eventually passes $ts_f$.
    Let $\tau$ be the point in time when this happens.
    By \refinv{safb}, $\phi(\tau',p,ts_f)$ is a decreasing function for any later point $\tau' > \tau$ in time.

    Now, assume that $\phi(\tau,p,ts_f)$ is empty.
    The precondition at \refline{gmcast:deliver:1} for $m$ is true at time $\tau$.
    Indeed, all the messages $m'$ preceeding $m$ would be in $\phi(\tau,p,ts_f)$ which is by assumption empty.
    Moreover, it must be always true at any later point in time.
    Hence, $p$ delivers eventually message $m$.
    
    Otherwise, if $\phi(t,p,ts_f)$ is not empty, we can apply inductively the above reasoning on every message in $\phi(t,p,ts_f)$ starting from its smallest element.
    Thus, from what precedes , function $\phi(t,p,ts_f)$ converges towards an empty set over time and messaage $m$ is eventually delivered.
\end{proof}

\begin{theorem}
    \refalg{gmcast} guarantees the Termination property of generic multicast.
\end{theorem}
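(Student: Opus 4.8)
The plan is to chain the three liveness invariants \refinv{create-deliver}, \refinv{clock-synchronizes}, and \refinv{eventually-deliver} together so that their composition matches the shape of the Termination property. Recall that Termination has a disjunctive hypothesis: either a correct process invokes $\gamcast(m)$, or a correct process invokes $\gdeliver(m)$. I would reduce both disjuncts to the premise of \refinv{create-deliver}, namely that some correct process invokes $\gamcast(m)$ or some process executes $\gbdeliver{}(\msgBegin(m))$.

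For the first disjunct this is immediate, since a correct process invoking $\gamcast(m)$ is literally the first case of the premise of \refinv{create-deliver}. For the second disjunct I would trace backward through the pseudo-code: if a correct process $p$ invokes $\gdeliver(m)$, then by the precondition at \refline{gmcast:deliver:1} a $\msgDeliver(m,t)$ message must have resided in $Mem$. Such a message is inserted only at \refline{gmcast:decide:4}, which in turn requires a $\proposeMsg(m,\any)$ to be present, and that proposal is added at \refline{gmcast:begin:6} inside the block guarded by $\gbdeliver{g}(\msgBegin(m))$ at \refline{gmcast:begin:1}. Hence $p$ itself must have executed $\gbdeliver{}(\msgBegin(m))$, so the premise of \refinv{create-deliver} holds in this case as well.

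Once the premise of \refinv{create-deliver} is established, I would apply that invariant to conclude that every correct process $q \in m.d$ eventually inserts a $\msgDeliver(m,\any)$ message into its $Mem$ set. Invoking \refinv{eventually-deliver} at each such $q$ then yields that $q$ eventually executes $\gdeliver(m)$, which is exactly the conclusion of Termination. Note that \refinv{clock-synchronizes} need not be applied directly here: it is consumed inside the proof of \refinv{eventually-deliver} to guarantee that the clock eventually reaches the final timestamp and thereby enables the precondition at \refline{gmcast:deliver:1}.

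The main obstacle is modest and lies entirely in the $\gdeliver(m)$ disjunct: one must verify the chain of $Mem$ state transitions (from $\msgBegin$ through $\proposeMsg$ to $\msgDeliver$) to land on a genuine $\gbdeliver{}(\msgBegin(m))$ event, and observe that the process performing that delivery lies in $m.d$ (which holds automatically, since $p$ processes $m$ only as a member of one of the groups partitioning $m.d$). The remainder is a direct composition of the already-established invariants and requires no new argument.
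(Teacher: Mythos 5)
Your proof is correct and follows exactly the paper's approach: the paper's own proof of this theorem is a one-line appeal to \refinv{create-deliver}, \refinv{clock-synchronizes}, and \refinv{eventually-deliver}, which is precisely the composition you spell out. Your additional work---reducing the $\gdeliver(m)$ disjunct to a $\gbdeliver{}(\msgBegin(m))$ event by tracing $Mem$ transitions, and observing that \refinv{clock-synchronizes} is only consumed inside the proof of \refinv{eventually-deliver}---is detail the paper leaves implicit, and it is sound.
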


\begin{proof}
    Follows from \refinv{create-deliver}, \refinv{clock-synchronizes}, and \refinv{eventually-deliver}.
\end{proof}

\fi

\end{document}
\typeout{get arXiv to do 4 passes: Label(s) may have changed. Rerun}
\endinput
%%
%% End of file `sample-sigconf-biblatex.tex'.